\documentclass[submission, Phys]{SciPost}

\pdfoutput=1
\usepackage{euscript,amsmath,amsthm,amsfonts,amssymb}
\usepackage{mathtools}
\usepackage{float}
\usepackage{outlines}
\usepackage{adjustbox}
\usepackage{longtable}
\usepackage{orcidlink}
\usepackage{tabularx}  
\usepackage[table]{xcolor}
\usepackage{makecell}
\usepackage{tikz}
\usepackage{arydshln}
\usepackage{graphicx}
\usepackage{subfig}
\usepackage{subcaption}
\usepackage{graphicx}

\definecolor{MIblue}{RGB}{0,122,183}
\definecolor{MIred}{RGB}{239,32,12}
\definecolor{MIgreen}{RGB}{27,179,0}
\definecolor{MIpurple}{RGB}{132,46,196}

\definecolor{MIcyan}{RGB}{0,174,245}
\definecolor{MIlime}{RGB}{91,215,49}
\definecolor{MIyellow}{RGB}{255,211,24}

\newcommand{\ket}[1]{|{#1}\rangle}

\newcommand{\bra}[1]{\langle{#1}|}

\newcommand{\be}{\begin{equation}}
\newcommand{\ee}{\end{equation}}
\newcommand{\bea}{\begin{eqnarray}}
\newcommand{\eea}{\end{eqnarray}}

\newcommand{\ba}{\begin{eqnarray}}
\newcommand{\ea}{\end{eqnarray}}

\newcommand{\la}{\langle}
\newcommand{\lb}{\rangle}

\def\mod{\ {\rm mod}\ }

\newtheorem{theorem}{Theorem}
\newtheorem{definition}{Definition}

\newcommand{\tq}{{\mathtt q}}

\def\CE{{\mathcal E}}

\hypersetup{
    colorlinks,
    linkcolor={red!50!black},
    citecolor={blue!50!black},
    urlcolor={blue!80!black}
}

\newcommand{\ClusterColoringBase}{%
  \path[use as bounding box]
    (-1.95,-2.18) rectangle (4.72,1.30);

  \foreach \j/\xx/\cc in {
    1/0.67/MIcyan,
    2/1.78/MIred,
    3/2.89/MIlime,
    4/4.00/MIyellow%
  }{
    \node at (\xx,1.06) {$A_{\j}$};
    \draw[\cc,line width=0.8pt]
      (\xx,0.48) -- (\xx,0.88);
  }

  \draw[line width=0.6pt]
    (0.08,-2.15) -- (0.08,0.35) -- (4.62,0.35);

  \foreach \row/\bits/\statecol/\la/\lb/\ra/\rb in {
    0/0000/MIblue/MIblue/MIred/MIblue/MIgreen,
    1/0011/MIred/MIblue/MIred/MIred/MIpurple,
    2/1100/MIgreen/MIgreen/MIpurple/MIblue/MIgreen,
    3/1111/MIpurple/MIgreen/MIpurple/MIred/MIpurple%
  }{
    \pgfmathsetmacro{\yy}{-0.60*\row}

    \node[anchor=east,inner sep=0pt]
      at (-0.62,\yy) {$|\bits\rangle$};
    \fill[\statecol]
      (-0.26,\yy) circle[radius=0.12];

    \foreach \xx in {0.67,1.78}{
      \fill[\la]
        ({\xx-0.17},\yy) circle[radius=0.12];
      \fill[\lb]
        ({\xx+0.17},\yy) circle[radius=0.12];
    }

    \foreach \xx in {2.89,4.00}{
      \fill[\ra]
        ({\xx-0.17},\yy) circle[radius=0.12];
      \fill[\rb]
        ({\xx+0.17},\yy) circle[radius=0.12];
    }
  }
}

\begin{document}
\begin{center}{\Large \textbf{A group theoretic calculation \\of multi-invariants of some stabilizer states
}}\end{center}

\begin{center}
Bartłomiej Czech\,\orcidlink{0000-0002-6368-3502}\textsuperscript{1},
Yichen Feng\,\orcidlink{0009-0003-2816-2714}\textsuperscript{1},
Jonathan Harper\,\orcidlink{0000-0003-4913-6470}\textsuperscript{2,3},
Xin-Xiang Ju\,\orcidlink{0000-0002-9905-3908}\textsuperscript{1,4},
\end{center}

\begin{center}
{\bf 1} Chen-Ning Yang Institute for Advanced Study, Tsinghua University, Beijing, China
\\ 
{\bf 2} Department of Physics,
Case Western Reserve University,\\ 10900 Euclid Ave, Cleveland, OH 44106 USA
\\
{\bf 3} Center for Gravitational Physics and Quantum Information,\\ Yukawa Institute for Theoretical Physics, Kyoto University,\\ Kitashirakawa Oiwakecho, Sakyo-ku, Kyoto 606-8502, Japan
\\
{\bf 4} School of Physical Sciences, University of Chinese Academy of Sciences, Beijing, China
\\
Email: {\small \sf \href{mailto:bartlomiej.czech@gmail.com}{bartlomiej.czech@gmail.com}},   {\small \sf \href{mailto:fyc23@mails.tsinghua.edu.cn}{fyc23@mails.tsinghua.edu.cn}},   {\small \sf \href{mailto:jonathan.harper@case.edu}{jonathan.harper@case.edu}},  {\small \sf \href{mailto:juxinxiang21@mails.ucas.ac.cn}{juxinxiang21@mails.ucas.ac.cn}}
\end{center}

\section*{Abstract}
\noindent 
We study multi-invariants---a multi-partite generalization of R{\'e}nyi entropies---of simple stabilizer states, for an arbitrary replica symmetry group. We develop a group theoretic technique for evaluating such invariants, which operates on the replica group and involves counting intersections among its cosets. Our techniques streamline the calculations of multi-invariants for graph states with two GHZ-like nodes and for certain generalizations thereof. The methodology developed here is most suitable when the quantum state is simple but flexibility over the replica group is desired.

\vspace{10pt}
\noindent\rule{\textwidth}{1pt}
\tableofcontents\thispagestyle{fancy}
\noindent\rule{\textwidth}{1pt}
\vspace{10pt}

\section{Introduction}
Quantifying quantum entanglement is without doubt a foundationally important problem in theoretical physics~\cite{HorodeckiEtAl2009}. While purely bipartite entanglement is fully classified by the entanglement spectrum, there is not a one-size-fits-all measure of multi-partite entanglement. This paper studies one recently proposed such measure, which---in a sense that we make precise below---is the closest possible analogue of the bipartite entanglement spectrum.

For a pure state $\ket{\Psi}$, a large class of entanglement measures are linear functionals $s^\vee$ applied to a number of copies of $\ket{\Psi}$ and of $\bra{\Psi}$~\cite{LeiferLindenWinter2004,CarrozzaChevrierLionni2026}:
\begin{equation}
    \mathcal{S} = s^\vee \Big( \ket{\Psi}^{\otimes s_k} \bra{\Psi}^{\otimes s_b} \Big)
\label{generalmeasure}
\end{equation}
Applied to states of bipartite systems $A \overline{A}$, this class includes the moments of the spectrum of $\rho_A = {\rm Tr}_{\overline{A}}\, \ket{\Psi} \bra{\Psi}$
\begin{equation}
    e^{(1-\alpha) S_\alpha} \equiv {\rm Tr}_A
    \Big( {\rm Tr}_{\overline{A}} \ket{\Psi} \bra{\Psi} \Big)^\alpha
\label{renyis}
\end{equation}
and---since the spectral density can be extracted from its moments in standard ways---the spectrum itself. A further generalization of (\ref{generalmeasure}), which involves explicit extremization~\cite{Vidal2000monotones}, essentially covers all measures of entanglement.

Among all measures~(\ref{generalmeasure}), the moments~(\ref{renyis})---here written as $e^{(1-\alpha) S_\alpha}$ using the definition of the R{\'e}nyi entropy $S_\alpha$---stand out as the simplest. If we represent the bipartite state
\be
    \ket{\Psi}_{A\overline{A}} = \sum_{i=1}^{d_A} \sum_{j=1}^{d_{\overline{A}}} 
    \psi_{ij}\, \ket{e^{i}_A} \otimes \ket{e^{j}_{\overline{A}}}
\ee 
graphically as a single-node `tensor network' \textemdash$\bullet$\textemdash~and similarly use \textemdash$\circ$\textemdash~for $_{A\overline{A}}\bra{\Psi}$ then the moments are simply:
\begin{equation}
    e^{(1-\alpha) S_\alpha} =\,
    \underbrace{\textrm{\textemdash$\bullet$\textemdash$\circ$\textemdash}
    ~\ldots~
\textrm{\textemdash$\bullet$\textemdash$\circ$\textemdash}\phantom{\big|}}_{\text{$\alpha$ copies, cyclically connected}}
\label{eq:moments}
\end{equation}
In this expression, we treat $\ket{\Psi}$~and $\bra{\Psi}$ as jigsaw puzzle pieces with two connectors each (one for subsystem $A$ and one for $\overline{A}$), then form a compound with all connectors joint. Once more, we emphasize that in finite-dimensional Hilbert spaces the bipartite entanglement spectrum is a routine reconfiguring of these quantities. 

This suggests a straightforward generalization to multi-partite states: treat them as puzzle pieces with three or more connectors and build closed structures. For example, by using copies of a three-partite state $\ket{\Psi}_{ABO}$ (respectively $_{ABO}\bra{\Psi}$) as jigsaw pieces with three distinct connectors, we can assemble the structure in shown in Figure~\ref{fig:intro3partyjigsaw}.
\begin{figure}[t]
  \centering
   \includegraphics[width=.8\textwidth,page=42]{figs/GHZWMEc.pdf}
  \caption{A jigsaw-like structure, which can be assembled from 6 copies of $\ket{\Psi}_{ABO}$ and 6 copies of $_{ABO}\bra{\Psi}$. The two panels are to be pasted together as top and bottom to form a topological sphere.}
\label{fig:intro3partyjigsaw}
\end{figure}
\noindent
Later in the text, we redraw the same quantity using a more versatile notation in Figure~\ref{fig:all1}. The main purpose of this paper is to develop an understanding of such quantities in a manageable yet interesting class of quantum states---that is, stabilizer states.

Somewhat surprisingly, the jigsaw-like quantities exemplified by Figure~\ref{fig:intro3partyjigsaw} were only introduced in the literature fairly recently---and in a rather different language \cite{GaddeKrishnaSharma2023,2025JHEP...06..116G}. The primary motivation there was an application to the AdS/CFT correspondence \cite{Maldacena1997} or holographic duality. Applied on the CFT side of the duality, a certain limit of these quantities---analogous to the von Neumann entropy limit of the R{\'e}nyi entropies~\cite{LewkowyczMaldacena2013}---was proposed to compute the minimal area of a branching surface~\cite{GaddeKrishnaSharma2022}, which divides the dual AdS spacetime into components. This proposal is a generalization of the famous Ryu-Takayanagi relation \cite{RyuTakayanagi2006} and its R{\'e}nyi predecessor \cite{Dong2016}.

Yet quantities like Figure~\ref{fig:intro3partyjigsaw} deserve scrutiny also---perhaps primarily---away from the holographic setting~\cite{BerthiereGaudin2026}. In going from equation~(\ref{eq:moments}) to Figure~\ref{fig:intro3partyjigsaw}, we replaced an essentially bipartite structure with a tripartite one. This lends credence to the expectation that such quantities can be probes of genuine multi-partite entanglement~\cite{IizukaNishida2025,IizukaLinNishida2025,Ju:2026xfh,Gadde2026Signals}. Moreover, if one has at one's disposal an efficient representation of the state in the form of a tensor network, such quantities are relatively easy to evaluate~\cite{PeningtonWalterWitteveen2023,HuLinNechita2026}. 

\paragraph{The purpose of this paper} is to examine a selection of quantities like Figure~\ref{fig:intro3partyjigsaw}, applied to the simplest stabilizer states and their close cousins. The quantities on which we focus have been called `multi-invariants' \cite{2025JHEP...06..116G}, a nomenclature that imposes a few further constraints on the jigsaw pattern, to be explained in Section~\ref{sec:multis}. We choose stabilizer states as a testing ground of multi-invariants~\cite{AkellaGaddePandey2026,CzechFengWuXie2026} because they are easy to study, widely known, have many applications in quantum computing and quantum error corrections, and their bipartite entanglement spectra are flat \cite{fattal2004stabilizer,hein2006entanglementgraphstatesapplications}. This last property implies that, should our calculations yield non-trivial answers, we will have pinned down some form of multi-partite entanglement.

Computing multi-invariants of stabilizer states knits together two structural elements: the defining jigsaw pattern and the stabilizer group of state $\ket{\Psi}$. In another paper, two of us (BC and YF) and collaborators presented related but more general calculations, which are centered around the stabilizer group of $\ket{\Psi}$ \cite{CzechFengWuXie2026}. When those calculations are applied to the task in this paper, the jigsaw pattern recedes to a secondary or even lesser importance. In contrast, the present paper pursues a different strategy, which turns out to be more beneficial when the underlying quantum state has a simple and rigid form. Our strategy centralizes the group theoretic properties of the jigsaw pattern, which is algebraically represented by {\bf a replica group}.

A computation that focuses on the pattern of contractions (and keeps it adjustable) is more germane to the motivation of this paper, which is to probe the utility of multi-invariants for characterizing multi-partite quantum entanglement. 
Unfortunately, our group-theoretic approach---while versatile with respect to the choice of jigsaw contractions---is only workable when the state $\ket{\Psi}$ is simple enough, as quantified by the number of non-vanishing components of its wavefunction in some natural basis. In other words, here we make a deliberate choice to accommodate broadly varying jigsaw patterns, at the cost of limiting the underlying states. In another paper \cite{SMI2}, which is being submitted to the arXiv concurrently, we make the opposite choice: accommodating a broader class of states under a rigid pattern of contractions, described by a replica symmetry $\mathbb{Z}_n^{q-1}$. The two approaches are complementary; we contrast them further in the Discussion.

To wit, the states considered in this paper are GHZ states~\cite{GHZ1989} on arbitrarily many parties as well as graph states with two GHZ-like centers. All these states have wavefunctions, which are simple in the requisite sense. They are also otherwise interesting because of the role they play in quantum computing \cite{RaussendorfBriegel2001,hein2006entanglementgraphstatesapplications} and CHSH inequalities \cite{Bell1964,CHSH1969,Mermin1990,hein2006entanglementgraphstatesapplications}. This makes them an attractive arena for testing Figure~\ref{fig:intro3partyjigsaw} and generalizations.

In variants of Figure~\ref{fig:intro3partyjigsaw} whose jigsaw pattern is haphazard and unstructured our calculational techniques do not lead to simplifications. When the contraction adheres to a certain organizing principle, however, the calculation reveals a deep and rewarding connection to the theory of finite groups. The group theoretic methodology harks back to the original, holographic motivation for considering these quantities: the pattern of contractions among indices of $\ket{\Psi}$ and $\bra{\Psi}$ should follow a permutation, which is the automorphism group of a tessellated two-dimensional manifold. Figure~\ref{fig:intro3partyjigsaw} manifestly satisfies this demand. While our calculations make use of this holographically motivated device, we have taken special care to write the paper in an entirely self-contained manner, accessible to readers with no familiarity with the AdS/CFT correspondence. Still, some constructions in this paper might hold extra appeal to readers familiar with holographic duality, in comparison to non-holographer colleagues.

\paragraph{Organization} 
Section~\ref{sec:multis} introduces multi-invariants and explains how they can be described geometrically as tessellations. In Section \ref{sec:warm-up}, we recast certain multi-invariant calculations on stabilizer states as a ``coloring problem" on the underlying tesselation. We provide two specific examples to build intuition: the GHZ state and the four-party cluster state. 
Section \ref{sec:coloringmethod} contains our main results: it explains how group theory considerations organize the coloring problem set up in Section~\ref{sec:warm-up}. In Section \ref{sec:moreparty}, we apply our techniques to multi-invariants of subsystems, which contain more than one qubit. Section \ref{sec:discussion} is the Discussion. Some technical facts from algebra are collected in the appendices.

\medskip

\paragraph{Note added} As we were completing this paper, we learned of concurrent work on a closely related subject by Abhijit Gadde and Sriram Akella \cite{akella2026renyimultientropygraphstates}. We thank these authors for synchronizing the arXiv submissions of our papers.

\providecommand{\tq}{{\mathtt q}}

\providecommand{\MIexistingfigure}[2]{%
  \IfFileExists{figs/GHZWMEc.pdf}{%
    \includegraphics[width=#2\textwidth,page=#1]{figs/GHZWMEc.pdf}%
  }{%
    \fbox{\parbox{0.80\textwidth}{\centering
      \textcolor{red}{Original artwork: \texttt{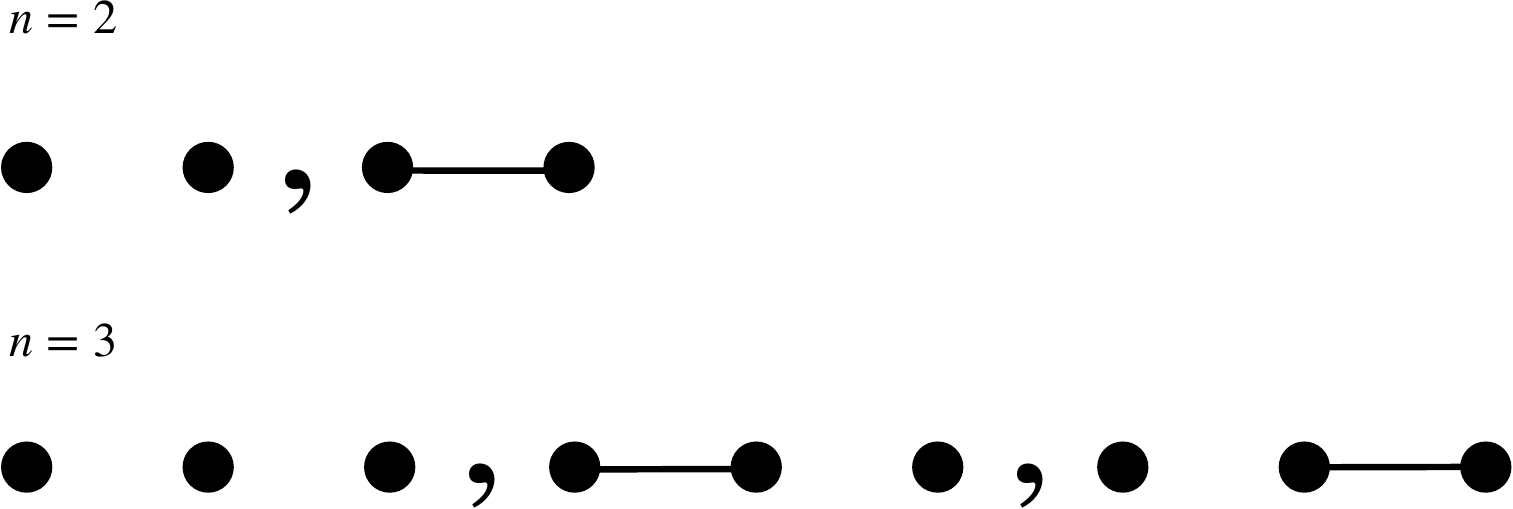},
      page #1. Insert the existing figure here; no replacement
      artwork has been generated.}}}%
  }%
}

\section{Multi-invariants of interest}
\label{sec:multis}

We consider multi-invariants of $\tq$-partite pure states. In the Introduction, we described them as complete contractions of wavefunctions
$|\Psi\rangle_{A_1A_2\ldots A_{\tq}}$ and
${}_{A_1A_2\ldots A_{\tq}}\langle\Psi|$, treated like pieces of a jigsaw
puzzle. In this paper, we focus on connected patterns whose symmetries
allow them to be organized as regular tessellations of orientable Riemann surfaces. The original motivation for this class of quantities is that it generalizes the replica construction of R\'enyi entropies in two-dimensional conformal field theories~\cite{2025JHEP...06..116G}. We shall see, however, that these patterns are useful in general quantum mechanical systems, without reference to conformal symmetry.

\subsection{Symmetry groups of jigsaw patterns}
\label{subsec:replica-contractions}

Represent wavefunctions $|\Psi\rangle_{A_1A_2\ldots A_{\tq}}$ as $\tq$-gons, with sides labeled by the $\tq$ constituent subsystems. Going clockwise around each $\tq$-gon, the sides are labeled $A_1$, $A_2,$ $\ldots,$ $ A_{\tq}$. The index is to be understood cyclically so that $A_{\tq} \equiv A_0$. In a similar fashion, the wavefunction ${}_{A_1A_2\ldots A_{\tq}}\langle\Psi|$ is also a $\tq$-gon, except the ordering $A_1$, $A_2,$ $\ldots,$ $ A_{\tq}$ circumnavigates the $\tq$-gon in the anticlockwise direction. Our object of study is a quantity obtained by gluing $s$ copies of ket $\tq$-gons with $s$ copies of bra $\tq$-gons, where gluing represents taking the partial trace. All gluings we consider satisfy two conditions:\footnote{In general, one might consider lifting some or all of these conditions, but this requires some arbitrary isomorphisms between the Hilbert spaces $\mathcal{H}_{A_i}$ and their duals.}
\begin{itemize}
\item[(i)] Kets are only glued to bras.
\item[(ii)] Only sides labeled by the same system $A_a$ can glue up.
\end{itemize}

Let us do the gluing in two stages: First glue up the sides labeled $A_q$, then glue all the others. After the first stage, we obtain $s$ copies of the ($\tq-1$)-partite reduced density matrix 
\begin{equation}
  \rho=\operatorname{Tr}_{A_q}|\Psi\rangle\langle\Psi|,
  \label{eq:reference-density-matrix}
\end{equation}
which is represented as a ($2\tq-2$)-gon with $\tq-1$ labeled ket sides and $\tq-1$ bra sides. The subsequent pattern according to which the $s$ density matrices are contracted defines $\tq-1$ permutations $g_a \in S_s$ ($1 \leq a \leq \tq-1$). The permutations are understood as follows: the $A_a$-labeled bra-type edge in the $r^{\rm th}$ copy of $\rho$ is connected to the $A_a$-labeled ket-type edge in the $g_a(r)^{\rm th}$ copy of $\rho$. With this convention, substituting the wavefunction
\begin{equation}
  |\Psi\rangle
  =\sum_{\{i_a\}}
    \psi_{i_1\, i_2 \ldots\, i_{\tq}}
    |i_1\rangle\, |i_2\rangle \ldots |i_{\tq}\rangle
  \label{wav-def}
\end{equation}
in the fully contracted jigsaw puzzle evaluates to:
\begin{equation}
  \mathcal E(g_1,\ldots,g_{\tq-1},e;\Psi)
  =
  \prod_{r=1}^{s}
  \sum_{\{i_a^{(r)}\}}
    \psi_{i_1^{(r)}\ldots\, i_{\tq-1}^{(r)} i_{\tq}^{(r)}}\,
    \overline{
      \psi_{i_1^{(g_1(r))}\ldots\, i_{{\tq-1}}^{(g_{\tq-1}(r))} i_{\tq}^{(r)}}
    }.
  \label{inv-def}
\end{equation}
Note that the partial trace over $A_\tq$ in (\ref{eq:reference-density-matrix}) effectively sets the permutation $g_\tq$ to the identity. Equation~(\ref{inv-def}) and the discussion leading to it accommodate an entirely general pattern of index contractions. We presently specialize to narrower settings.

Under additional assumptions (iii-iv) below, the group generated by the pattern of contractions $G = \langle g_1, g_2 \ldots g_{\tq-1} \rangle$ will be referred to as {\bf the replica group}. A special case with $\tq=2$ is $\mathbb{Z}_n = \langle (1\, 2 \ldots n) \rangle$, which defines the contraction pattern (\ref{eq:moments}) in the R{\'e}nyi entropies. We have constructed $G$ as a subgroup of $S_s$, but this imposes no restrictions on $G$ because every finite group can be embedded in some permutation group. 

We impose two additional conditions on $G$:
\begin{itemize}
\item[(iii)] Transitivity: The $s$ copies of $\rho$ form a single orbit under the action of $G$. Equivalently, the jigsaw puzzle we have assembled is connected. 
\item[(iv)] Free action: No $g \in G$ except the identity fixes any polygon.
\end{itemize}
Condition~(iv) can be equivalently restated in another, useful way:
\begin{itemize}
\item[(iv')] As permutations in $S_s$, all elements of $G$ are composed of disjoint of cycles of equal length.
\end{itemize}
Conditions~(iii-iv) define {\bf a regular action} of $G$ on the $s$ copies of $\rho$. These assumptions guarantee a special relationship between $G$ and the jigsaw pattern:
\begin{itemize}
    \item Choose one arbitrary ($2\tq-2$)-gon in the jigsaw pattern and label it with the identity $e \in G$. This induces a unique bijection between ($2\tq-2$)-gons and group elements $g \in G$, in the sense that a path, which starts at polygon $e$ and crosses edges labeled $A_{a_1}, A_{a_2} \ldots A_{a_k}$, ends at the polygon labeled $g = g_{a_1} g_{a_2} \ldots g_{a_k}$. This assignment of ($g \in G$)-labels to ($2\tq-2$)-gons is well defined, i.e. path-independent.
    \item Equivalently, let us collapse the jigsaw pattern to its skeleton graph: the ($2\tq-2$)-gons become vertices, which are connected by $A_a$-labeled edges. The resulting graph is the Cayley graph of group $G$.
    \item The replica group $G$ is isomorphic to the automorphism group of the jigsaw.
\end{itemize}
The bijection between copies of $\rho$ and group elements $g$ justifies why we call $G$ the replica symmetry group.

\paragraph{Extended replica symmetry} In this text, we shall also need {\bf the extended replica group} $\hat{G} = \langle r_0, g_1, g_2 \ldots g_{\tq-1} \rangle$. The additional generator $r_0$ reflects the ($2\tq-2$)-gon labeled $e$ across its $A_\tq$-labeled edge, which is contracted in (\ref{eq:reference-density-matrix}). (Recall that $A_0 \equiv A_{\tq}$.) This characterization uniquely defines the action of $r_0$ on the rest of the jigsaw. Generator $r_0$ exchanges bras $\leftrightarrow$ kets and therefore induces orientation-switching, extended automorphisms of the jigsaw pattern. Group $\hat{G}$ has order $2s$.

In analogy to how $r_0$ is defined, we may set $r_a$ to be the reflection across the $A_a$-labeled edge of the $e$-labeled ket. The generators $g_a$ of $G$ satisfy
\begin{equation}\label{eq:ga-r0ra}
g_a = r_0 r_a \qquad (1 \leq a \leq \tq -1).
\end{equation}
Thus, we may also write $\hat{G} = \langle r_0, r_1, r_2 \ldots r_{\tq-1} \rangle$. Clearly, $G$ is the normal subgroup of $\hat{G}$, which comprises elements generated by an even number of $r$'s. 

If we think of every $\rho$ in (\ref{eq:reference-density-matrix}) as two glued $\tq$-gons---a bra and a ket---then the jigsaw pattern consists of $s$ ket-type and $s$ bra-type $\tq$-gons. Similar to the relationship between the jigsaw and group $G$, we may label the $2s$ constituent $\tq$-gons with elements of $\hat{G}$. The skeleton of the jigsaw of $2s$ constituent $\tq$-gons is, once again, the Cayley graph of $\hat{G}$. Group $\hat{G}$ is called the extended replica symmetry because its elements label all wavefunctions $|\Psi\rangle_{A_1A_2\ldots A_{\tq}}$ and ${}_{A_1A_2\ldots A_{\tq}}\langle\Psi|$, which are contracted in quantity~(\ref{inv-def}).

\subsection{Meaning in conformal field theory}
\label{subsec:replica-geometry}
In two-dimensional conformal field theory, the machinery described above takes on an additional meaning. We briefly sketch this connection, referring the interested readers to the original references \cite{2025JHEP...06..116G,HarperTakayanagiTsuda2024}. Readers who only care about finite-dimensional Hilbert spaces can skip this section at no peril.

Suppose that wavefunction $|\Psi\rangle_{A_1A_2\ldots A_{\tq}}$ is prepared by a path integral on the Euclidean half-plane, possibly with operator insertions. If we map the half-plane conformally to a $\tq$-gon, the resulting quantum state is still $|\Psi\rangle_{A_1A_2\ldots A_{\tq}}$, up to a change in normalization. Our jigsaw puzzle glued from $s$ ket-type and $s$ bra-type $\tq$-gons is, therefore, an $s$-sheeted cover of the Euclidean plane. The evaluation of (\ref{eq:reference-density-matrix}) in two-dimensional conformal field theory boils down to the path integral on a Riemann surface $\Sigma$, which covers the plane $s$ times.

Many properties of $\Sigma$ can be easily computed in closed form. By the Riemann-Hurwitz formula, its genus $g_\Sigma$ is given by
\begin{equation}
  2-2g_\Sigma
  =s\left(2-\tq+\sum_{a=0}^{\tq-1}\frac{1}{\ell_a}\right),
  \label{eq:RH}
\end{equation}
where $\ell_a$ is the order of the group element $r_a r_{a+1} = g_a^{-1} g_{a+1}$. By property (iv') above, every such element acts on the $2s$ $\tq$-gons in $s / \ell_a$ orbits of length $\ell_a$. 

In uniformizing coordinates on $\Sigma$, the action of $r_a r_{a+1} = g_a^{-1} g_{a+1}$ becomes a rotation by $2 \pi / \ell_a$ around each image of the point that separates $A_a$ from $A_{a+1}$; $\Sigma$ contains $s / \ell_a$ such images. Therefore---still in uniformizing coordinates---the interior angle of the ket $\tq$-gon at the $A_a / A_{a+1}$-separating vertex is $\pi / \ell_a$. Because $G$ and $\hat{G}$ act transitively on the $s$ identical polygons in $\Sigma$, we can quotient $\Sigma$ by both these groups. Then the $(2\tq-2)$-gon that prepares $\rho$ is a fundamental domain of $\Sigma / G$ whereas the $\tq$-gon that prepares $|\Psi\rangle_{A_1A_2\ldots A_{\tq}}$ is a fundamental domain of $\Sigma / \hat{G}$.

\subsection{Examples}
\label{subsec:multi-invariant-examples}
We give three examples. The first---the three-party dihedral invariant---serves as a pedagogical warm-up; it will be used in Section~\ref{sec:warm-up} to illustrate our coloring method. The other two examples will be our main testing grounds for the non-GHZ calculations in Section~\ref{sec:coloringmethod}.

Occasionally, we write group presentations in the form $\langle\text{generators}\mid\text{relations}\rangle$, with an abstract set of generators and relations. This differs slightly from the notational convention $G = \langle g_1, g_2 \ldots g_{\tq-1} \rangle$ employed thus far, where the $g_a$ were drawn from the permutation group and so their relations were set implicitly by the multiplication table of $S_s$.

\paragraph{Three-party dihedral invariants} were introduced
in Ref.~\cite{2025JHEP...06..116G} and studied further in
Ref.~\cite{HarperMollabashiTakayanagiTasuki2025}. We take $s=2n$ copies of the state $|\Psi\rangle_{OAB}$ labeled $1,2 \ldots 2n$ and $2n$ copies of ${}_{OAB}\langle \Psi|$ labeled $\overline{1}, \overline{2}\ldots \overline{2n}$. The replica symmetry group $G\simeq \mathbb{D}_{2n}$\footnote{Throughout this paper, the subscript under $\mathbb{D}$ denotes the order of the dihedral group.} is generated by the permutations:
\begin{equation}
  g_A=(2n\,\, 2n\!-\!1\, \ldots\, n\!+\!1) (n\,\,n\!-\!1\, \ldots\, 1),
  \qquad
  g_B=(2n\,\, 1)(2n\!-\!1\,\, 2)\,\ldots\, (n\!+\!1\,\, n).
  \label{eq:dihedral-three-matchings}
\end{equation}
The extended replica group $\hat{G}$ is
\begin{equation}
  \hat G=\left\langle r_O,r_A,r_B\ \middle|\
  r_O^2=r_A^2=r_B^2=(r_Or_A)^n=(r_Ar_B)^2=(r_Br_O)^2=e
  \right\rangle,
  \label{eq:dihedral-extended-group}
\end{equation}
where we introduce reflections $r_O,r_A,r_B$ across the three edges of the reference ket labeled $e$ and use $g_A=r_O r_A$ and $g_B=r_O r_B$ from~\eqref{eq:ga-r0ra}.

In two-dimensional conformal field theory, the last three relations mean that the path integral preparation of $|\Psi\rangle_{OAB}$ is done on a triangle with interior angles $\pi / n$ (at the $O/A$ corner) and $\pi/2$ (at the other two corners), assuming uniformized coordinates. With these data
equation~(\ref{eq:RH}) gives $g_\Sigma = 0$ so the jigsaw pattern is topologically a sphere. 

This sphere is clearly visible in Figure~\ref{fig:intro3partyjigsaw} (in the case $n=3$). In Figure~\ref{fig:all1}, we redraw this figure so as to make explicit the fact that $G$ and $\hat{G}$ act as automorphism groups of the assembled Riemann surface. We do so by labeling the constituent polygons of $\Sigma$ with replica indices $1,2 \ldots 6$ and $\overline{1}, \overline{2}\ldots \overline{6}$ and then again with group elements of $G$ and $\hat{G}$.

\begin{figure}[t!]
    \centering

    \par
      \MIexistingfigure{36}{0.450}

    \vspace{0.5cm}

  \MIexistingfigure{35}{0.450}
    \hfill
     \MIexistingfigure{43}{0.435}

    \caption{\textbf{Top}: The dihedral tessellation for $n=3$ with replica labels;
  unbarred and barred labels indicate ket and bra pieces. consisting of twelve
  elementary triangles with angles $(\pi/2,\pi/2,\pi/3)$. \textbf{Bottom left}: The same tessellation labeled by group elements of $\hat{G}$ acting on a reference ket. \textbf{Bottom right}: The coarsened tessellation by $(2\tq-2)$-gons (here 4-gons) that prepare reduced density matrices $\rho_{AB}$, labeled by group elements of $G$. }
    \label{fig:all1}
\end{figure}

\paragraph{R\'enyi multi-entropy.}

The $\tq$-party R\'enyi-$n$ multi-entropy uses a cyclic replica
direction for each non-reference party~\cite{GaddeKrishnaSharma2022}.
Choose $O=A_q$\footnote{It is obvious that the multi-entropy is symmetric under permutations of the parties, taking $A_q=O$ as the reference party here is merely a state-independent convention.} and label the replicas by
\begin{equation}
  \mathbf x=(x_1,\ldots,x_{\tq-1})\in(\mathbb Z_n)^{\tq-1},
  \qquad s=n^{\tq-1},
  \qquad n\geq2,
  \label{eq:renyi-replica-lattice}
\end{equation}
where $\mathbb Z_n$ denotes integers modulo $n$. For $a=1,\ldots,\tq-1$,
let $\mathbf e_a$ be the unit vector in the $x_a$ direction. The
gluing permutations are
\begin{equation}
  g_1(\mathbf x)=\mathbf x,
  \qquad
  g_a(\mathbf x)=\mathbf x+\mathbf e_a
  \quad(a=1,\ldots,\tq-1).
  \label{eq:renyi-gluing}
\end{equation}
Thus the $A_a$ index of the bra at $\mathbf x$ meets the $A_a$ index
of the ket at $\mathbf x+\mathbf e_a$. The $\tq-1$ independent shifts
commute and generate $G\simeq\mathbb Z_n^{\tq-1}$. The extended group
also includes a ket--bra reflection, which conjugates each of these
translations to its inverse; its order is $2n^{\tq-1}$.

The coordinate array in \eqref{eq:renyi-replica-lattice} is the
\emph{replica lattice} used in our later assignment calculations;
its $\tq-1$ cyclic directions label replicas. We emphasize that the $(\tq-1)$-dimensional replica lattice is not the glued jigsaw pattern $\Sigma$, which is two-dimensional.

Denoting the polynomial contraction with $\mathcal E_n^{(\tq)}$, the R\'enyi-$n$ multi-entropy in the convention of \cite{HarperTakayanagiTsuda2024} is:
\begin{equation}
  S_n^{(\tq)}(A_1:\cdots:A_{\tq})
  =\frac{1}{(1-n)n^{\tq-2}}
     \log\mathcal E_n^{(\tq)}.
  \label{eq:renyi-multientropy-definition}
\end{equation}

The factor $n^{-(\tq-2)}$ ensures that adjoining an
uncorrelated pure party leaves the entropy unchanged; its additional
replica direction produces $n$ independent copies of the original
contraction. or $\tq=2$, the construction gives
$\mathcal E_n^{(2)}=\operatorname{Tr}\rho_{A_2}^{n}$ and
\eqref{eq:renyi-multientropy-definition} is the ordinary R\'enyi
entropy. The \emph{multi-entropy} is
\begin{equation}
  S^{(\tq)}=\lim_{n\to1}S_n^{(\tq)}
  \label{eq:multientropy-limit}
\end{equation}
whenever an analytic continuation has been specified and the limit
exists.

\begin{figure}[t]
  \centering
  \MIexistingfigure{39}{0.90}
  \caption{The four-party genus-two contraction with twenty-four
  elementary quadrilaterals. Its group and party assignments are
  given in
  \eqref{eq:genus-two-generator-convention}--\eqref{eq:genus-two-gluing}.
  }
  \label{fig:4partyg2replicas}
\end{figure}

\paragraph{A four-party genus-two invariant} was first considered
in Ref.~\cite{2025JHEP...06..116G}. 
We take $s=12$ copies of the state $|\Psi\rangle_{A_1A_2A_3A_4}$, with
cyclic side order $(A_1,A_2,A_3,A_4)$ and $O=A_1$, which will remain fixed in the later cluster-state calculations. As in the previous example of three-party dihedral invariants, we introduce reflections 
$r_1,r_2,r_3,r_4$ across the four edges of the reference ket labeled
$e$ and set:
\begin{equation}
  a=r_2r_1,\qquad b=r_3r_1,\qquad c=r_4r_1
  \label{eq:genus-two-generator-convention}
\end{equation}
The extended group is:
\begin{equation}
  \begin{split}
    \hat G=\langle r_1,a,b,c\mid\;&
      a^3=b^2=c^2=r_1^2=e,\\
      &(ab)^2=(ac)^2=(bc)^2=e,\\
      &r_1ar_1=a^{-1},\quad
       r_1br_1=b,\quad r_1cr_1=c\rangle.
  \end{split}
  \label{eq:genus-two-extended-group}
\end{equation}
Its even subgroup, the replica symmetry group, is:
\begin{equation}
  G=\langle a,b,c\mid
    a^3=b^2=c^2=(ab)^2=(ac)^2=(bc)^2=e\rangle
  \simeq\mathbb D_6\times\mathbb Z_2
  \label{eq:genus-two-replica-group}
\end{equation}
Here $a,b$ generate $\mathbb D_6$ and $bc$ is an independent
commuting involution. 

The orders of elements $r_a r_{a+1}$ are respectively $\{3,2,2,2\}$, so the surface $\Sigma$ is assembled from 24 quadrilaterals with interior angles $(\pi/3,\pi/2,\pi/2,\pi/2)$; see Figure~\ref{fig:4partyg2replicas}. From formula~(\ref{eq:RH}), this surface as genus $g = 2$, which explains the name of this multi-invariant.

In particular, reflections across the $A_1$- and $A_2$-labeled sides generated a subroup $\langle r_1,r_2\rangle=\langle r_1,a\rangle$ of $\tilde{G}$/. Similarly, reflections across the $A_3$- and $A_4$-labeled sides generate $\langle r_3,r_4\rangle=\langle br_1,cr_1\rangle$. These subgroups will play an important role in Section~\ref{sec:coloringmethod}. 


\section{Warm-up examples and a coloring method}\label{sec:warm-up}
As a warm-up, we consider the calculations of multi-invariants of several commonly encountered quantum states. We begin with the three-party GHZ state of qudits:
\begin{equation}
    |{\rm GHZ}\rangle = 
    \frac{1}{\sqrt{d}}\sum_{a,b,c=0}^{d-1}\delta_{ab}\delta_{bc}|a,b,c\rangle
\end{equation}
As an example, the R\'enyi-2 multi-entropy involves the following computation:
\be
\begin{split}
&\psi_{abc}\psi_{def}\psi_{ghi}\psi_{jkl}\bar\psi^{dhc}\bar\psi^{akf}\bar\psi^{jbi}\bar\psi^{gel}\\
&=\frac{1}{d^4}\sum_{a\dots x}\underbrace{\delta_{ab}\delta_{bc}\delta_{de}\delta_{ef}\delta_{gh}\delta_{hi}\delta_{jk}\delta_{kl}}_{\psi's}\underbrace{\delta_{mn}\delta_{no}\delta_{pq}\delta_{qr}\delta_{st}\delta_{tu}\delta_{vw}\delta_{wx}}_{\bar\psi's}\underbrace{\delta_{ap}\delta_{bt}\delta_{co}\delta_{dm}\delta_{ew}\delta_{fr}\delta_{gv}\delta_{hn}\delta_{iu}\delta_{js}\delta_{kq}\delta_{lx}}_{\text{contractions}}
\end{split}
\label{ghzcrazy}
\ee
The final answer is $d^{-3}$. In principle, the sum in (\ref{ghzcrazy}) involves a summation over $d^{24}$ terms---$d^3$ from each of the eight copies of $|{\rm GHZ}\rangle$ and $\langle{\rm GHZ}|$. In the end, however, only $d$ of them contribute. This is because the wavefunction $|{\rm GHZ}\rangle$, written as a vector in the $d^3$-dimensional Hilbert space, is sparse.

Our aim is to study the computation of multi-invariants in other sparse wavefunctions (with a smaller degree of sparsity). Toward that goal, it is useful to associate non-vanishing terms in contractions such as (\ref{ghzcrazy}) to colorings of the jigsaw pattern. {\bf One color will be associated with a single component of the superposition.} Thus, there are $d$ colors in the GHZ wavefunction of qudits.  A non-vanishing contribution to the contraction will be represented by a {\bf valid coloring} of the jigsaw pattern. Whether or not a general term in the summation vanishes will be represented by a distinction between valid and invalid colorings. In this language, the key to computing multi-invariants lies in distinguishing invalid colorings from valid colorings and then in counting the latter.

In the case of the three-party GHZ case on qubits ($d=2$), the coloring rules are given in the following table:

\begin{figure}[H]
  \centering
   \includegraphics[width=.6\textwidth,page=34]{figs/GHZWMEc.pdf}
  \caption{The coloring rules for the $\tq=3$ GHZ state of qubits ($d=2$).}
\label{fig:GHZcolorrules}
\end{figure}
\noindent 
This table should be read in the following way: If a triangle representing $|{\rm GHZ}\rangle$ is colored blue ($\psi^{000}$) then all its bra-neighbors must also be blue ($\bar\psi^{000}$), and likewise for the color red ($\psi^{111}$ and $\bar\psi^{111}$). 

These rules demand that every coloring be uniform over the entire jigsaw. We illustrate the two valid colorings pertinent to the dihedral invariant of the qubit ($d=2$) GHZ state in Figure~\ref{fig:ghzconfig}. At general $d$, we always have precisely $d$ valid colorings---regardless of the contraction being computed. Including the normalization $(d^{-1/2})^{2s}$ for a pattern with $|\hat{G}|=2s$, the jigsaw pattern evaluates to $d^{1-s}$. We emphasize that this conclusion applies to {\bf all jigsaw patterns}.
\begin{figure}[H]
\begin{tabular}{cc}
\centering
\includegraphics[width=.45\textwidth,page=37]{figs/GHZWMEc.pdf}&\qquad
\includegraphics[width=.45\textwidth,page=38]{figs/GHZWMEc.pdf}
\end{tabular}
\caption{\label{fig:ghzconfig} In the GHZ state, all valid colorings are uniform.}
\end{figure}

\paragraph{A more challenging example}
Consider the four-party R\'enyi-2 multi-entropy of the four-party cluster state:
\be\label{eq:C4}
|C_{4}\rangle=\frac{1}{2}\left(|0000\rangle+|0011\rangle+|1100\rangle-|1111\rangle\right).
\ee
Now, the $(-1)$ phase requires us to count colorings with an even and an odd number of $|1111\rangle$s and $\langle 1111|$s and then to take their difference.

As before, we start with the coloring rules:

\begin{figure}[H]
  \centering
   \includegraphics[width=.6\textwidth,page=26]{figs/GHZWMEc.pdf}
  \caption{The coloring rules for the 4-party cluster state.}
\label{fig:4partyclustercolorrules}
\end{figure}

\noindent
Here every component of the superposition has two viable neighbors across each side. 
For a general jigsaw pattern, the task is to compute:
\begin{equation}
    2^{-(2s)} \sum_{\text{valid colorings}}(-1)^{\text{\#\{purple $\tq$-gons}\}}
    \label{validcluster4}
\end{equation}
The prefactor comes from $2s$ amplitudes of magnitude $(1/2)$, for an extended replica symmetry with $|\hat{G}|=2s$. Each valid coloring contributes $\pm 1$ depending on the number of $|1111\rangle$ and $\langle 1111|$ components.

An example valid coloring, which is relevant to computing a $G = \mathbb{Z}_2^3$ multi-invariant, is shown in Figure~\ref{fig:4partyrenyi2config}. In general, however, counting colorings by hand quickly becomes intractable. We would like to have a more principle approach. This is the subject of the next section.

\begin{figure}[H]
  \centering
   \includegraphics[width=.6\textwidth,page=27]{figs/GHZWMEc.pdf}
  \caption{A valid coloring, which contributes negatively to (\ref{validcluster4}).}
\label{fig:4partyrenyi2config}
\end{figure}

\section{A group theoretic method}
\label{sec:coloringmethod}
We concentrate on state~(\ref{eq:C4}) and make the following observation. The coloring rules in Figure~\ref{fig:4partyclustercolorrules} naturally organize themselves four $2 \times 2$ bichromatic blocks. The blocks have the following meaning:
\begin{itemize}
\item If any $4$-gon is colored red/blue then its neighbors across $A_1$- and $A_2$-labeled edges are also red/blue.
\item If any $4$-gon is colored green/purple then its neighbors across $A_1$- and $A_2$-labeled edges are also green/purple.
\item If any $4$-gon is colored green/blue then its neighbors across $A_3$- and $A_4$-labeled edges are also green/blue.
\item If any $4$-gon is colored red/purple then its neighbors across $A_3$- and $A_4$-labeled edges are also red/purple.
\end{itemize}
We highlight the four $2\times 2$ blocks in Figure~\ref{fig:cluster-coloring-blocks}.

\tikzset{
  MI diagram/.style={
    x=1cm,
    y=1cm,
    font=\large
  },
  MI block zero/.style={
    draw=black,
    line width=1.6pt,
    line join=miter
  },
  MI block one/.style={
    draw=black!50,
    line width=1.6pt,
    line join=miter
  }
}

\begin{figure}[htbp]
\centering
\makebox[\linewidth][c]{%
\resizebox{0.6\paperwidth}{!}{%
\begin{tikzpicture}[MI diagram]

  \ClusterColoringBase


  \draw[MI block zero]
    (0.16,0.27) rectangle (2.30,-0.87);

  \draw[MI block one]
    (0.16,-0.93) rectangle (2.30,-2.07);


  \foreach \yy in {0,-1.20}{
    \draw[MI block zero]
      (2.38,{\yy+0.27}) rectangle (4.59,{\yy-0.27});
  }

  \foreach \yy in {-0.60,-1.80}{
    \draw[MI block one]
      (2.38,{\yy+0.27}) rectangle (4.59,{\yy-0.27});
  }

\end{tikzpicture}%
}%
}

\caption{The four $2\times 2$ blocks in the coloring rules for state~(\ref{eq:C4}), which are highlighted in (\ref{color2bits}).}
\label{fig:cluster-coloring-blocks}
\end{figure}

These facts suggest labeling the four colors in Figure~\ref{fig:4partyclustercolorrules} by two binary variables: red/blue (0) versus green/purple (1), and green/blue (0) versus red/purple (1). Thus, we have an identification:
\begin{align}
    {\rm blue}\, & \leftrightarrow \, 00 \nonumber \\
    {\rm red}\, & \leftrightarrow \, 01 \label{color2bits} \\
    {\rm green}\, & \leftrightarrow \, 10 \nonumber \\
    {\rm purple}\, & \leftrightarrow \, 11 \nonumber
\end{align}
This identification descends in the obvious way from the original wavefunction~(\ref{eq:C4}). Importantly for the calculation, the purple 4-gons (which determine the $\pm 1$ contribution of each valid coloring) are uniquely marked by a pair of 1's.

We further recall that, when the 4-gons in the jigsaw are labeled by elements of the extended replica symmetry $\hat{G}$, crossing $A_1$- and $A_2$-type edges amounts to applying generators $r_1$ and $r_2$. This means that a collection of 4-gons sharing the same first bit in the assignment~(\ref{color2bits}) must form cosets of the subgroup $\hat{H}_1 = \langle r_1, r_2 \rangle$ of $\hat{G}$. The same is true for the second bit, except that 4-gons with the same second bit occupy cosets of $\hat{H}_2 = \langle r_3, r_4 \rangle$. We highlight both types of cosets in Figure~\ref{fig:4partyrenyi2cosets}, using the $\mathbb{Z}_2^3$ jigsaw as an example. 

Bijection~(\ref{color2bits}) between colors and $\mathbb{F}_2^{\otimes 2}$ is powerful because it allows us to easily express necessary and sufficient conditions for a valid coloring. In particular, {\bf any two maps} $\hat{G}/\hat{H}_1 \to \mathbb{F}_2$ and $\hat{G}/\hat{H}_2 \to \mathbb{F}_2$ define a valid coloring, and every every valid coloring defines a pair of such maps. Therefore, the sum in (\ref{validcluster4}) boils down to summing over pairs of binary vectors $u \in \mathbb{F}_2^{|\hat{G}/\hat{H}_1|}$ and $v \in \mathbb{F}_2^{|\hat{G}/\hat{H}_2|}$. 

The correct weight of one term in such a such is the number of purple 4-gons. Let $a_{ij}$ be the parity of the number of elements of $\hat{G}$, which are common to the $i^{\rm th}$ coset of $\hat{H}_2$ and the $j^{\rm th}$ coset of $\hat{H}_1$:
\be
(A)_{ij}=|c_i\hat H_2\cap d_j\hat H_1|~~ (\mod{2})
\label{cosetintersect}
\ee
In this equation, we have implicitly chosen some arbitrary representatives of disjoint cosets $c_i \hat{H}_2$ and $d_j \hat{H}_1$. Matrix~$A$ defined in (\ref{cosetintersect}) is the left coset intersection matrix. We review some of its properties in Appendix~\ref{sec:leftcosetprop}.

In a valid coloring, we can count purple 4-gons by collecting 4-gons in all intersections of $\hat{H}_1$- and $\hat{H}_2$-cosets, which are labeled 11. The coloring characterized by $u \in \mathbb{F}_2^{|\hat{G}/\hat{H}_1|}$ and $v \in \mathbb{F}_2^{|\hat{G}/\hat{H}_2|}$ therefore contains $u^{\rm T} A v$ (mod 2) purple 4-gons. In other words, sum~(\ref{validcluster4}) equals:
\begin{equation}
\label{eq:arf}
    2^{-2s} 
    \sum_{u\, \in\, \mathbb{F}_2^{|\hat{G}/\hat{H}_1|}}
    \sum_{v\, \in\, \mathbb{F}_2^{|\hat{G}/\hat{H}_2|}}
    (-1)^{u^{\rm T} A v}
\end{equation}

\begin{figure}[t!]
\begin{tabular}{cc}
\centering
\includegraphics[width=.4\textwidth,page=29]{figs/GHZWMEc.pdf}&\qquad\qquad
\includegraphics[width=.4\textwidth,page=30]{figs/GHZWMEc.pdf}
\end{tabular}
\caption{\label{fig:4partyrenyi2cosets} A tessellation can be decomposed into left cosets of $\hat H_1=\langle r_1,r_2\rangle$ (left) and $\hat H_2=\langle r_3,r_4\rangle$ (right). Here $(O,A,B,C)\equiv(A_1,A_2,A_3,A_4)$.}
\end{figure}

 To compute the sum, define $w=A v$. Using orthogonality in $\mathbb{F}_2^{|\hat{G}/\hat{H}_1|}$:
\be
\sum_{u}(-1)^{u\cdot w}=\begin{cases} 
2^{|\hat{G}/\hat{H}_1|} & \quad w=0\\ 0 &\quad w\neq 0 \end{cases}
\ee
Thus, the problem reduces to counting the solutions of $Av=0$. From the rank-nullity theorem, there are $2^{\ker(A)}=2^{|\hat{G}/\hat{H}_2|-{\rm rank}(A)}$ of them. (Here, of course, the rank of $A$ is computed over $\mathbb{F}_2$.) The final answer becomes:
\be
 2^{-(2s)} \sum_{\text{valid colorings}}(-1)^{\text{\#\{purple $\tq$-gons\}}}
 = 
2^{-2s+|\hat{G}/\hat{H}_1|+|\hat{G}/\hat{H}_2|-{\rm rank}(A)}
\label{finalcoarse}
\ee

As an example, for the four-party R{\'e}nyi-2 multi-entropy of state~(\ref{eq:C4}) we have $2s=16$, $|\hat{G}/\hat{H}_1|=|\hat{G}/\hat{H}_2|=4$ and ${\rm rank}(A)=1$. Including the standard normalization from (\ref{eq:renyi-multientropy-definition}), we find:
\begin{equation}
  S_{n=2}^{(4)}(A_1:A_2:A_3:A_4)
  =\frac{1}{(1-2)2^{4-2}}
     \log({\rm eq.~}\ref{finalcoarse}) = \frac{9}{4}\log2
  \label{eq:renyi2ME4clusteranswer}
\end{equation}

\paragraph{Effect of the repetition code}
The four-party cluster state is the smallest example of a more general class of states, which we call core-2 states: 
\begin{equation}\label{eq:defn_core2state}
\ket{\text{core-}2_{p+q}} = \tfrac{1}{2}\Bigl(
\ket{\underbrace{0\dots 0}_p \underbrace{0\dots 0}_q}
+ \ket{\underbrace{1\dots 1}_p \underbrace{0\dots 0}_q}
+ \ket{\underbrace{0\dots 0}_p \underbrace{1\dots 1}_q}
- \ket{\underbrace{1\dots 1}_p \underbrace{1\dots 1}_q}
\Bigr)
\end{equation}
The first `core' contains $p$ parties and the second core contains $q$ parties.
All such states descend from the Bell pair $CZ|++\rangle$ via the repetition quantum code. (Here $CZ = (-1)^{(1-Z) \otimes (1-Z)/4}$ is the controlled phase gate and $|+\rangle$ is the $(+1)$-eigenstate of the Pauli $X$ operator.)

The method outlined above applies to all such states directly because the basis elements (colors) can still be labeled as in (\ref{color2bits}). What changes with $p$ and $q$ is the number of party labels, and hence the number of labeled edges in the tessellation. This change implies that the subgroups $\hat{H}_1$ and $\hat{H}_2$ have different orders:
\begin{equation}
\hat{H}_1
= \langle r_a ~~ (1 \leq a \leq p) \rangle,
\qquad
{\rm and}
\qquad
\hat{H}_2
= \langle r_b ~~ (p+1 \leq b \leq p+q) \rangle,
\label{eq:core2-subgroups-general}
\end{equation}
Formula~(\ref{finalcoarse}) applies to this case without any modifications. We emphasize that it is valid for all jigsaw-like contraction patterns. The only underlying assumption is that group $G$ acts regularly (freely and transitively) on the underlying tessellation.

\subsection{Example 1: R{\'e}nyi multi-entropy of core-2 states}
In applying equation~(\ref{finalcoarse}), we shall make use of group theoretic facts, which are reviewed in Appendix~\ref{sec:leftcosetprop}.

\paragraph{Four-party cluster state}
With four parties, the extended replica symmetry group relevant to computing the R{\'e}nyi-2 multi-entropy is: 
\be
\hat G=  \langle r_1,a,b,c|r_1^2,a^n,b^n,c^n,(r_1a)^2,(r_1b)^2,(r_1c)^2,aba^{-1}b^{-1},aca^{-1} c^{-1}, bcb^{-1}c^{-1} = e\rangle.
\ee
The other reflections are defined so $a=r_2r_1$, $b=r_3r_1$ and $c=r_4r_1$.\footnote{As in the genus-two example of Section~\ref{subsec:multi-invariant-examples}, the reference party is $A_1$, so $r_0=r_1$ and $g_1=e$, $g_2=a^{-1}$, $g_3=b^{-1}$, $g_4=c^{-1}$.} The order of the extended replica group is $|\hat G|=2n^3$. Our two subgroups are $\hat H_1=\langle r_1,r_2\rangle=\langle r_1,a\rangle$ and $\hat H_2=\langle r_3,r_4\rangle=\langle br_1,cr_1\rangle$, each of order $2n$. 

To apply Theorem~\ref{thm:atilde} from Appendix~\ref{sec:leftcosetprop}, we first determine the normal cores. (See Definition~\ref{defcore} for this terminology.) The strategy is to first identity a normal subgroup of $\hat G$ contained in $\hat H_{1,2}$ and then show that it cannot be made any larger.
\begin{itemize}
    \item $\hat H_1:$ First we note $a$ commutes with $b,c$. Therefore, we only need to check that $r_1ar_1=a^{n-1}\in\langle a \rangle $ to conclude that $\langle a\rangle$ is a normal subgroup of $\hat G$. Furthermore $\langle r_1 \rangle$ is not normal since e.g. $b^{n-1}r_1b=b^{n-2}r_1\notin\hat H_1$. So we have $C_{\hat G}(\hat H_1)=\langle a\rangle$.
    \item  $\hat H_2:$ Similarly $\langle br_1cr_1=bc^{n-1}\rangle$ is a normal subgroup of $\hat G$ since $bc^{n-1}$ commutes with $a,b,c$ and $r_1bc^{n-1}r_1=b^{n-1}c\in \langle bc^{n-1}\rangle$. But since $b^{n-1}br_1b=b^{n-1}r_1\notin\hat H_2$ we have $C_{\hat G}(\hat H_2)=\langle bc^{n-1}\rangle$.
\end{itemize}

Now we take the quotients as described in \eqref{eq:Aquotient} to find:
\be\label{eq:core2quotient}
\begin{split}
    \tilde{G}=&\,G/\langle a,bc^{n-1}\rangle=\langle b,r_1|b^n, r_1^2,(br_1)^2\rangle\sim \mathbb{D}_{2n}\\
    \tilde H_1=&\,\hat H_1/C_{\hat G}(\hat H_1)=\{e,r_1\}\sim \mathbb{Z}_2\\
     \tilde H_2=&\,\hat H_2/C_{\hat G}(\hat H_2)=\{e,br_1\}\sim \mathbb{Z}_2
\end{split}
\ee
For both $\tilde H_1$ and $\tilde H_2$ the left cosets can be indexed by the representatives $\{e,b,\cdots,b^{n-1}\}$ but for $\tilde H_2$ the second element will have exactly one additional power of $b$ (mod $n$). This means that for all $n$ the coset intersections have the simple form:
\be
b^j\tilde H_2 ~\text{ intersects }~ b^j\tilde H_1 \text{ and } b^{j+1}\tilde H_1\; (j\mod n).
\label{highnintersects}
\ee
The case $n=2$ is exceptional in that the extended replica symmetry group is Abelian:
\be
\hat G=  \langle r_1,a,b,c|r_1^2,a^2,b^2,c^2,(r_1a)^2,(r_1b)^2,(r_1c)^2,ab=ba,ac=ca,bc=cb\rangle\sim \mathbb{Z}_2^4
\ee
That is because the relations $(r_1a_i)^n=e$ at $n=2$ imply that $r_1$ commutes with all the other generators. As a consequence, both $\hat{H}_1$ and $\hat{H}_2$ are normal subgroups and therefore $\tilde G\sim \tilde H_1\sim\tilde H_2$ are all trivial. Thus, $A$ at $n=2$ is the $1 \times 1$ identity matrix.

Employing~(\ref{highnintersects}) for higher $n$, we have the following sequence of intersection matrices indexed by $n \geq 2$:
\be
\begin{pmatrix}
1 
\end{pmatrix} \qquad
\begin{pmatrix}
1 & 1 & 0  \\
0 & 1 & 1 \\
1 & 0 & 1  
\end{pmatrix}\qquad
\begin{pmatrix}
1 & 1 & 0 & 0  \\
0 & 1 & 1 & 0 \\
0 & 0 & 1 & 1\\
1 & 0 & 0 & 1
\end{pmatrix}
\label{circulants}
\ee
Their ranks equal $n-1$. We prove this fact in Appendix~\ref{app:circ}. 

Reinstating all factors from definition~(\ref{eq:renyi-multientropy-definition}) and using~(\ref{finalcoarse}), we find for the R{\'e}nyi multi-entropy of state~(\ref{eq:C4}):
\be
  S_n^{(4)}(A:B:C:D) = 
\frac{1}{(1-n)n^2}(-2n^3+2n^2-n+1)\log2=\left(2+\frac{1}{n^2}\right)\log2
\ee
This generalizes result \eqref{eq:renyi2ME4clusteranswer} to $n\geq 3$. Taking the $n\rightarrow1$ limit (with the canonical analytic continuation), we obtain the multi-entropy equal to $3\log2$.

\paragraph{Other core-2 states}
The above result can then be extended to all core-2$_{p+q}$ graph states. Throughout this paragraph, $p$ and $q$ (with $p,q \geq 2$) denote the sizes of the two cores, so that the total number of parties is $p+q$.
Now we have for the extended replica symmetry group
\be
\hat G=  \langle \{a_i\},r_1~|~a_i^n=r_1^2=(r_1a_i)^2=e,a_ia_j=a_ja_i\rangle,
\ee
where for the $(p+q)$-party state $i\in\{1,\dots,p+q-1\}$.
The two subgroups are now:
\be
\hat H_1 =\langle r_1, a_1r_1,\cdots a_{p-1}r_1\rangle, \quad \hat H_2 =\langle a_{p}r_1,\cdots a_{p+q-1}r_1\rangle
\ee
Their orders and indices are $|\hat H_1|=2n^{p-1}$, $|\hat G:\hat H_1|=n^{q}$ and $|\hat H_2|=2n^{q-1}$, $|\hat G:\hat H_2|=n^{p}$. The cores defined in Definition~\ref{defcore} are:
\be
C_{\hat G}(\hat H_1)=\langle a_1,a_2,\cdots, a_{p-1}\rangle \quad C_{\hat G}(\hat H_2)=\langle a_{p}a_{p+1}^{n-1},a_{p}a_{p+2}^{n-1},\cdots, a_{p}a_{p+q-1}^{n-1}\rangle
\ee
Importantly, when we take the quotients we get \emph{exactly} the same $\tilde G, \tilde H_1, \tilde H_2$ and, consequently, ${\rm rank}(A)=n-1$. Collecting all factors in (\ref{eq:renyi-multientropy-definition}) together, we find:
\be
S_n^{(p+q)} = 
\frac{1}{1-n}\frac{1}
{n^{p+q-2}}\log\left(\frac{2^{n^p+n^q-n+1}}{2^{2n^{p+q-1}}}\right)=\left(\sum_{k=0}^{p-2}\frac{1}{n^k}+\sum_{k=0}^{q-2}\frac{1}{n^k}+\frac{1}{n^{p+q-2}}\right)\log 2
\ee
The $n \to 1$ limit with the canonical analytic continuation gives the multi-entropy equal to $(p+q-1)\log 2$.

\subsection{Example 2: Four-party measure with genus 2}
For the definition, refer to Section~\ref{subsec:multi-invariant-examples} and Figure~\ref{fig:4partyg2replicas}.

\paragraph{Four-party cluster state}
The two subgroups are given by $\hat H_1=\langle r_1,r_2 \rangle=\langle r_1,a \rangle\sim\mathbb{D}_6$ and $\hat H_2=\langle r_3,r_4 \rangle= \langle br_1,cr_1 \rangle\sim \mathbb{Z}_2^2$. It can be confirmed that they are both normal in $\hat G$, so $C_{\hat G}(\hat H_1)=\hat H_1$ and $C_{\hat G}(\hat H_2)=\hat H_2$. Because $\hat H_1\cap\hat H_2=\{e\}$ and $|\hat H_1||\hat H_2|=24=|\hat G|$, the quotient group $\tilde G=\hat G/\langle\hat H_1,\hat H_2\rangle$ is trivial. Equivalently, the fact that $\hat G=\hat H_2e\hat H_1$ makes the intersection matrix $A_{6\times4}$ consist entirely of ones; see Figure~\ref{fig:4partyg2Hcosets}. Consequently, the rank of the intersection matrix is 1. 

Substituting $2s=24$, $|\hat G/\hat H_1|=4$, $|\hat G/\hat H_2|=6$, and ${\rm rank}(A)=1$ into~(\ref{finalcoarse}) (or directly using equation~(\ref{eq:bothnormal}) from the Appendix), the genus-2 multi-invariant evaluates to:
\be
\CE(e,a^{-1},b,c;\ket{C_4})=2^{-24+4+6-1}=2^{-15}.
\ee

\begin{figure}[H]
\begin{tabular}{cc}
\centering
\includegraphics[width=.4\textwidth,page=40]{figs/GHZWMEc.pdf}&\qquad\qquad
\includegraphics[width=.4\textwidth,page=41]{figs/GHZWMEc.pdf}
\end{tabular}
\caption{The decompositions of $\hat H_1$ (L) and $\hat H_2$ (R) into left cosets. The matrix $A$ is determined by looking at the intersection. Here, every colored region in the left panel has a single quadrilateral in common with each colored region in the right panel.}
\label{fig:4partyg2Hcosets} 
\end{figure}

\paragraph{Swapped four-party cluster state} Consider:
\be\label{eq:cluster}
|C_{4s}\rangle=\frac{1}{2}\left(|0000\rangle+|0101\rangle+|1010\rangle-|1111\rangle\right).
\ee
The corresponding subgroups are given by $\hat H_1=\langle r_1,r_1b \rangle$ and $\hat H_2= \langle r_1a^2,r_1c \rangle$, each of which is a $\mathbb{Z}_2\times\mathbb{Z}_2$ subgroup of index 6. We determine their normal cores in $\hat G$ as follows:
\begin{itemize}
    \item $\hat H_1$: From the relations in~(\ref{eq:genus-two-generator-convention})--(\ref{eq:genus-two-extended-group}), $r_1b$ commutes with $r_1$, $a$, $b$, and $c$. (This is because $a(r_1b)a^{-1}=\allowbreak ar_1ba^2=\allowbreak r_1a^2ba^2=\allowbreak r_1b$.) Thus, $r_1b$ is central in $\hat G$ and $\langle r_1b\rangle$ is a normal subgroup of $\hat G$ contained in $\hat H_1$. Notably, any normal subgroup of $\hat G$ contained in $\hat H_1$ must avoid $r_1$ and $b$ because $ar_1a^{-1}=r_1a\notin\hat H_1$ and $aba^{-1}=ba\notin\hat H_1$. Therefore, among the subgroups of $\hat H_1\simeq\mathbb{Z}_2^2$, only $\{e\}$ and $\langle r_1b\rangle$ are normal in $G$. Hence $C_{\hat G}(\hat H_1)=\langle r_1b\rangle\simeq\mathbb{Z}_2$.
    \item $\hat H_2$: Likewise, $r_1c$ is central in $\hat G$, and the conjugates $a(r_1a^2)a^{-1}=r_1\notin\hat H_2$ and $a(ac)a^{-1}=c\notin\hat H_2$ rule out $r_1a^2$ and $ac$ as members of any normal subgroup of $\hat G$ contained in $\hat H_2$, leaving only $\{e\}$ and $\langle r_1c\rangle$. Hence $C_{\hat G}(\hat H_2)=\langle r_1c\rangle\simeq\mathbb{Z}_2$.
\end{itemize}
Taking the quotient~(\ref{eq:Aquotient}) by $\langle r_1b,r_1c\rangle\simeq\mathbb{Z}_2^2$ identifies $b\equiv r_1$ and $c\equiv r_1$. Setting $x\equiv a$ and $y\equiv r_1$, we obtain:
\be
\begin{split}
\tilde G&=\hat G/\langle r_1b,r_1c\rangle=\langle x,y\mid x^3=y^2=(xy)^2=e\rangle\simeq\mathbb{D}_6, \nonumber \\
\tilde H_1&=\hat H_1/C_{\hat G}(\hat H_1)=\{e,y\}\simeq\mathbb{Z}_2, \\
\tilde H_2&=\hat H_2/C_{\hat G}(\hat H_2)=\{e,xy\}\simeq\mathbb{Z}_2. \nonumber
\end{split}
\ee
Note that this triple $(\tilde G,\tilde H_1,\tilde H_2)$ quotient is identical to the $n=3$ instance of~\eqref{eq:core2quotient} in the previous subsection (with $x$ playing the role of $b$ there). Thus, $\tilde A$ is the same $3\times3$ circulant matrix as in~(\ref{circulants}), which yields ${\rm rank}(A)={\rm rank}(\tilde A)=2$. Substituting $|\hat G/\hat H_1|=|\hat G/\hat H_2|=6$ and ${\rm rank}(A)=2$ into~(\ref{finalcoarse}), we obtain:
\be
\CE(e,a^{-1},b,c;\ket{C_{4s}})=2^{-24+6+6-2}=2^{-14}.
\ee
This contrast between the computations for states (\ref{eq:C4}) and (\ref{eq:cluster}) demonstrates that the genus-2 multi-invariant is sensitive to the permutations on subsystems. 

\section{Discussion}
\label{sec:discussion}
Motivated by recent works on multi-invariants that originated in the AdS/CFT correspondence, here we begin an exploration of explicit calculations of these quantities. We study stabilizer states because of their tractability and diverse applications. Further borrowing from the holographic context, we concentrate on those patterns of index contractions, which can be understood with reference to regular tessellations of two-dimensional manifolds. 

With this restriction on the ``jigsaw pattern,'' the calculation of multi-invariants of simple stabilizer states becomes amenable to a group theoretic approach. By ``simple states,'' we mean states with two GHZ-like cores and some close cousins thereof such as, for example, state~(\ref{eq:cluster}). A key technical tool is to organize the information about the phases of non-vanishing terms in the index contraction by double cosets of certain subgroups of the extended replica group. 

\paragraph{Scope of method} The key thing, which makes our computational method work, is that the phases of non-vanishing combinations of terms enter the multi-invariant depend on a single bilinear. This is explicit, for example, in equation~(\ref{eq:arf}). That fact, in turn, descends from the construction of core-2 states by the controlled-$Z$ gate $(-1)^{(1-Z) \otimes (1-Z)/4}$. This bilinear structure occurs frequently in physics literature and is generally associated with the group cohomology of $\mathbb{Z}_2 \times \mathbb{Z}_2$ \cite{TP1, GT1} and, by extension, to symmetry-protected topological order \cite{BS1}. The two $\mathbb{Z}_2$ factors are manifest in equation~(\ref{color2bits}). It would be interesting to see if multi-invariants of states with symmetry protected topological (SPT) order might more generally be evaluated by similar methods. We leave this speculation to future work. 

A more pedestrian extension of the scope of this work is to consider states whose phases are still quadratic or bilinear but involve more terms. We consider this in another paper, which is submitted to the arXiv concurrently \cite{SMI2}. There, we concentrate on multi-R\'enyi entropies whose replica group is $\mathbb{Z}_N^{\tq -1}$. For that replica group, the algebraic approach developed in the present paper is an overkill. On the other hand, an alternative methodology based on Gauss sums over finite fields allows us to carry out the computation for a significantly broader class of stabilizer states. Our two papers can be viewed as complementary: the present one explores a broad class of replica groups on a narrow set of states while \cite{SMI2} does the opposite.

\subsection{More than one qubit per party}
\label{sec:moreparty}
Thus far, we have concentrated on $\tq$-partite multi-invariants $\tq$-qubit states. That is, every party $A_a$ consisted of an individual qubit. We now outline what happens when you collect more than one qubit in a subsystem. 

Because multi-invariants are---indeed---{\it invariant} under local unitary transformations, we may act with arbitrary gates with each party without affecting the invariant. In particular, we can apply $CZ = (-1)^{(1-Z) \otimes (1-Z)/4}$ gates to entangle or disentangle the qubits within a subsystem. This operation often allows us to reduce computations of more complicated stabilizer states into smaller components. We may formalize this with the following definition:

\begin{definition}
    Consider a stabilizer state whose Clifford-equivalent graph state representation is connected.\footnote{Given a graph with $N$ vertices and a bijection between the vertices and $N$ qubits, the corresponding {\bf graph state} is prepared from $|+\rangle^{\otimes N}$ by applying $CZ$ gates on those pairs of qubits, which are connected by edges in the graph. Every stabilizer state is related to at least one graph state by a qubit-wise Clifford transformation~\cite{PhysRevA.69.022316}.} Given a partition of the constituent qubits into parties (subsystems) $A_a$, we say that the state is \textbf{reducible} with respect to that partition if unitary transformations acting subsystem-wise can render the underlying graph disconnected.
\end{definition}

\noindent
The simplest example is the GHZ state on $\tq$ parties, which is reducible with respect to every partition of its qubits except the $\tq$-partite one-qubit-per-subsystem partition. This is illustrated in Figure~\ref{fig:GHZpackaged}:

\begin{figure}[H]
  \centering
   \includegraphics[width=.9\textwidth]{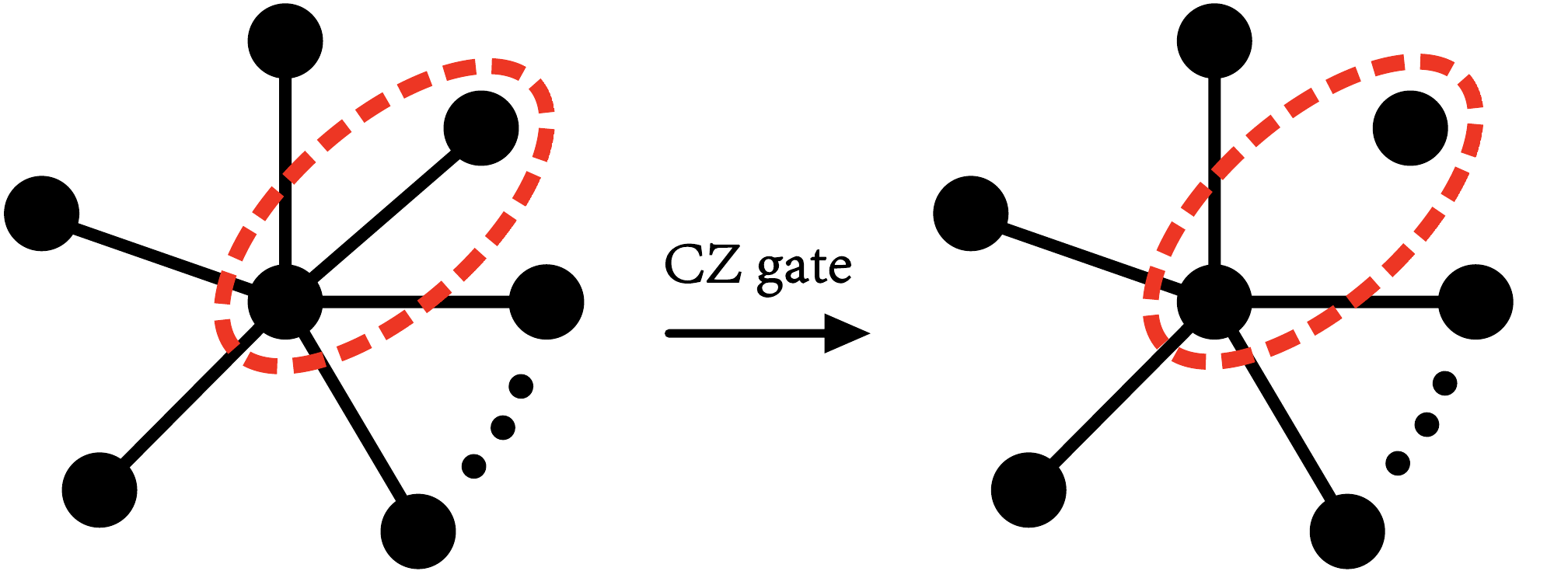}
  \caption{A GHZ state is locally Clifford equivalent to a star graph state. Due to the symmetry of $|{\rm GHZ}_{\tq}\rangle$, every qubit can play the role of the central node. Applying a CZ gate on the two highlighted qubits maps the state to $|{\rm GHZ}_{\tq-1}\rangle \otimes |+\rangle$.}
\label{fig:GHZpackaged}
\end{figure}
\noindent
A slightly less trivial class of examples comprises core-2 states~(\ref{eq:defn_core2state}). One graph representation of state $\ket{\text{core-}2_{p+q}}$ looks like two star graphs with $p-1$ and $q-1$ arms, respectively, whose central nodes are connected by an additional edge. To see that it is reducible with respect to every partition except the finest one ($1^{p+q}$), we consider a pair of qubits in one of two cases: 
\begin{itemize}
    \item Both qubits belong to the same block of either $p$ or $q$ qubits. This case reduces to that shown in Figure~\ref{fig:GHZpackaged}.
    \item One qubit belongs to the block of $p$ qubits and the other to the block of $q$ qubits. By the symmetry of $\ket{\text{core-}2_{p+q}}$, those qubits can serve as the central nodes of the two stars in the graph representation. Then applying a $CZ$ on them disconnects the two stars.
\end{itemize}

\paragraph{Reducing the calculation to smaller invariants}
So long as the graph representation of a stabilizer state is reducible, we can reduce the calculation of its multi-invariants to smaller, component calculations. This is due to two properties of multi-invariants:
\begin{itemize}
    \item Any multi-invariant that does not involve the maximally fine partition of qubits ($1^{\tq}$) can be rewritten as some other multi-invariant with the maximally fine partition. For example:
    \begin{equation}
S_{n}^{(p+q-1)}(A_1:...:A_{p-1}:A_pB_1:B_2:...:B_q)
=S_{n}^{(p+q)}(A_1:...:A_p:B_1:...:B_q)
    \end{equation}
    \item A multi-invariant of a $p+q$ partite product state $\rho_{AB}=\rho_{A}\otimes\rho_{B}=\rho_{A_1...A_s}\otimes\rho_{B_1...B_t}$ always decomposes into a sum of some multi-invariants of $\rho_{A_1...A_s}$ and $\rho_{B_1...B_t}$, evaluated separately. For example:
    \begin{equation}
        S_{n}^{(p+q)}(A_1:...:A_s:B_1:...:B_t)|_{\rho_A \otimes \rho_B}
        =S_{n}^{(s)}(A_1:...:A_s)|_{\rho_{A}}+S_{n}^{(t)}(B_1:...:B_t)|_{\rho_{B}}
        \label{eq:red}
    \end{equation}
    On the right hand side, we set the `one-partite multi-entropy' $S_{n}^{(1)}=0$ by convention. We emphasize that the multi-invariants after the reduction may not have a direct relationship to the initial multi-invariant, like they do in example~(\ref{eq:red}). 
\end{itemize}
So long as the state is reducible, step~(\ref{eq:red}) can be taken and the computation splits up into smaller components.

\paragraph{Example}
For illustration, consider $S_{n}^{(4)}(A_{1}A_2:A_3B_1:B_2:B_3)$ of $|\text{core-}2_{3+3}\rangle$. Applying the above steps sequentially yields:
\begin{equation}
    \begin{aligned}
S_{n}^{(4)}(A_{1}A_2:A_3B_1:B_2:B_3)|_{\text{core-}2_{3+3}}&=S_{n}^{(4)}(A_{1}:A_3B_1:B_2:B_3)|_{\text{core-}2_{2+3}}\\
&=S_{n}^{(4)}(A_{1}:A_3B_1:B_2:B_3)|_{\text{Bell}\otimes\text{GHZ}_3}\\
&=S_{n}^{(2)}(A_{1}:A_3)|_{\text{Bell}}+S_{n}^{(3)}(B_1:B_2:B_3)|_{\text{GHZ}_3}\\
&=2+\frac1n.
    \end{aligned}
\end{equation}
For stabilizer states beyond core-$2$ states, we can prove that there always exists at least one partition of qubits with respect to which the state is not reducible. In such cases, the calculations of multi-invariants become harder but remain doable. We discuss this in forthcoming work \cite{SMI2}.

\section*{Acknowledgments}
We are grateful to Abhijit Gadde and Sriram Akella for useful discussions and for synchronizing the arXiv submissions of our papers. XXJ thanks Ya-Wen Sun for useful discussions. BC thanks the organizers of the workshops ``Discussions on Quantum Spacetime'' held at LMSI (Mumbai, India)
and ``Quantum Gravity, Chaos, and Observers for Black Holes and Cosmology'' held at the Aspen Center for Physics, where some of this work was completed. The work of BC, YF, and XXJ is supported by an NSFC grant number 12042505. The work of JH is supported by MEXT KAKENHI Grant-in-Aid for Transformative Research Areas (A) through the ``Extreme Universe'' collaboration: Grant Number 21H05187. The work of XXJ is also supported by the Shuimu Tsinghua Scholar Program of Tsinghua University.

\appendix

\section{Properties of the left coset intersection matrix}
\label{sec:leftcosetprop}

In this appendix we deduce some of properties and structure of the left coset intersection matrix $A$. Throughout we will assume that the two subgroups $\hat H_1$ and $\hat H_2$ have the additional property that $\hat H_1\cap\hat H_2=\{e\}$. In particular this places some strict constraints on $A$ as a consequence of the following theorem:
\begin{theorem}
    Let G be a group and $H,K$ subgroups of $G$ then $H\cap K=\{e\}$ iff $\forall g_1,g_2 \in G$ $g_1H\cap g_2K$ is either empty or a singleton.
\end{theorem}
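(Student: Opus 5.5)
The plan is to prove the two implications separately. Both rest on one elementary observation: the intersection of two cosets, when nonempty, is itself a coset of the intersection subgroup. Concretely, if $x \in g_1H\cap g_2K$, then $g_1H = xH$ and $g_2K = xK$, since a coset is determined by any one of its elements. Hence
\begin{equation}
g_1H\cap g_2K \;=\; xH\cap xK \;=\; x\,(H\cap K).
\end{equation}
The last equality is the one step to check. If $y = xh = xk$ with $h\in H$, $k\in K$, then left-cancellation gives $h=k\in H\cap K$. Conversely, every element $xh$ with $h\in H\cap K$ lies in both $xH$ and $xK$.

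For the forward direction, assume $H\cap K=\{e\}$ and take any $g_1,g_2\in G$. Either $g_1H\cap g_2K$ is empty and there is nothing to show, or it contains some $x$. In the latter case the identity above gives $g_1H\cap g_2K = x\{e\}=\{x\}$, which is a singleton.

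For the converse, specialize to $g_1=g_2=e$. Then $eH\cap eK = H\cap K$, and this set contains $e$, so it is nonempty. By hypothesis it must therefore be a singleton, and since it contains $e$ it equals $\{e\}$.

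There is no genuine obstacle in this argument. The only point needing care is the coset-intersection identity, specifically that the intersection of $xH$ and $xK$ is exactly $x(H\cap K)$ and not something larger. This uses only cancellation in $G$ and requires no finiteness assumption on $G$.

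It is also worth noting how this applies later in the appendix. Under the standing assumption $\hat H_1\cap \hat H_2=\{e\}$, every entry of the coset intersection matrix $A$ in (\ref{cosetintersect}) is $0$ or $1$ before the reduction mod $2$ is taken. The matrix $A$ is then a $0/1$ incidence matrix, with no further parity bookkeeping required.
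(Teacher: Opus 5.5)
Your proof is correct, and it is organized differently from the paper's. The paper proves only the forward implication, by a direct computation. It takes two elements $x=g_1h_1=g_2k_1$ and $y=g_1h_2=g_2k_2$ of $g_1H\cap g_2K$, derives $h_1^{-1}h_2=k_1^{-1}k_2\in H\cap K$, and concludes $x=y$ from $H\cap K=\{e\}$. The reverse implication is never addressed there.

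You instead first establish $g_1H\cap g_2K=x(H\cap K)$ for any $x$ in the intersection. This is exactly the paper's subsequent more general theorem, which the paper states without proof. You then read off both directions, the reverse one by taking $g_1=g_2=e$.

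The underlying mechanism is the same cancellation in both arguments: any two points of the intersection differ on the right by an element of $H\cap K$. Your packaging, however, proves the general coset-intersection statement along the way and explicitly closes the reverse implication that the paper leaves implicit. The paper's two-element computation is slightly more hands-on but establishes strictly less.
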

\begin{proof}
    Suppose $g_1H\cap g_2K$ contains at least two elements let these be $x,y$. This means that there exists $x,y\in G$ and $h_1,h_2\in H$ and $k_1,k_2\in K$ such that $x=g_1h_1=g_2k_1$ and $y=g_1h_2=g_2k_2$ but rearranging then we have
    \be
    \begin{split}
     h_1=g_1^{-1}g_2k_1\\
        h_2=g_1^{-1}g_2k_2
    \end{split}
    \ee
and considering the product we find
\be
h_1^{-1}h_2=k_1^{-1}g_2^{-1}g_1g_1^{-1}g_2k_2=k_1^{-1}k_2.
\ee
Now $h_1^{-1}h_2\in H$ and $k_1^{-1}k_2\in K$ thus $h_1^{-1}h_2=k_1^{-1}k_2\in H\cap K$. By assumption $H\cap K=\{e\}$ which means $h_1=h_2$ and $k_1=k_2$. In other words $x=y$ and $g_1H\cap g_2K$ contains at most one element, a contradiction.
\end{proof}
This is actually a special case of the following more general theorem:
\begin{theorem}
Let G be a group and $H,K$ subgroups of $G$ then $\forall g_1,g_2\in G$ $g_1H\cap g_2K$ is either empty or a left coset of $H\cap K$.
\end{theorem}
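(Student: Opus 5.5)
The plan is to show directly that any nonempty intersection $g_1H\cap g_2K$ equals $x(H\cap K)$ for any element $x$ lying in it. This mirrors the proof of the previous theorem, where the key step was that two elements $x,y$ of the intersection satisfy $x^{-1}y\in H\cap K$. So suppose $g_1H\cap g_2K\neq\emptyset$ and fix some $x$ in it. Since $x\in g_1H$, we have $g_1H=xH$, and since $x\in g_2K$, we have $g_2K=xK$. This is the standard fact that a left coset is determined by any of its elements. The intersection therefore becomes $xH\cap xK$.

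The main step is the identity $xH\cap xK=x(H\cap K)$. Left multiplication by $x$ is a bijection of $G$, so it commutes with intersections: $x(H\cap K)=xH\cap xK$.

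For a more elementary route that parallels the previous proof, I will check both inclusions. If $y=xh=xk$ with $h\in H$ and $k\in K$, then $h=x^{-1}y=k\in H\cap K$, so $y\in x(H\cap K)$. Conversely, any $xz$ with $z\in H\cap K$ lies in both $xH$ and $xK$. Since $H\cap K$ is a subgroup, $x(H\cap K)$ is a left coset of it, which proves the claim.

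I do not expect a genuine obstacle. The only point needing care is the re-basing step $g_1H=xH$, which I will justify explicitly. If $x=g_1h_0$, then $xH=g_1h_0H=g_1H$, and likewise $g_2K=xK$.

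Finally, I will note how the previous theorem follows as a corollary. When $H\cap K=\{e\}$, every left coset of $H\cap K$ is a singleton. For the converse direction of that theorem, take $g_1=g_2=e$: then $H\cap K$ is itself an intersection of cosets, so it must have at most one element, forcing $H\cap K=\{e\}$.
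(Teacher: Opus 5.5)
Your proposal is correct and complete: re-basing both cosets at a common element $x$ (so $g_1H=xH$, $g_2K=xK$) and checking $xH\cap xK=x(H\cap K)$ is all that is needed. The paper states this theorem without proof, but your key computation ($y=xh=xk$ forces $h=k\in H\cap K$) is the same one the paper uses for the special case $H\cap K=\{e\}$, where it derives $h_1^{-1}h_2=k_1^{-1}k_2\in H\cap K$. Your closing remark with $g_1=g_2=e$ also supplies the reverse implication of that earlier theorem, which the paper's proof does not address.
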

So in particular since $\hat H_1\cap \hat H_2=\{e\}$ then each element of $G$ is its own unique coset of $H\cap K$ and the intersection of the left cosets is at most one element.\footnote{One can define the matrix $A$ for any choice of subgroups of any finite group (independent of applications to multi-invariants). If we look at more general subgroups with non-trivial overlap then the matrix $A$ will consists only of the entries $0$ and $|\hat H_1 \cap \hat H_2|$ $\mod2$ as a result of theorem 2. So if the cardinality of the intersection of the subgroups is even then every entry in the intersection matrix is 0.}

Now we can list the following properties of $A$:
\begin{itemize}
\item The number of rows and columns is $m_1$ and $m_2$.
    \item Since cosets partition the group given a group element $g\in\hat G$ it will belong to exactly one coset of $\hat H_1$ and one of $\hat H_2$. Thus each group element will contribute exactly one non-zero entry to $A$ and we can conclude that in total $A$ contains exactly $2s$ non-zero entries.
    
    \item The columns of $A$ contains exactly $2l_1$ ones and each row contains exactly $2l_2$ ones. Note that since $2s=2l_1m_1=2l_2m_2$ we find that the total number of ones is $2s$ as expected. Thus, we have a binary matrix with fixed row and column sums.

    \item Let $c_i\hat H_2\cap d_j\hat H_1\neq\{\emptyset\}$ then there exists $k\in \hat H_2$ and $h\in \hat H_1$ such that $c_ik=d_jh$ equivalently rearranging $c_i^{-1}d_j=kh^{-1}$. Now by definition $kh^{-1}$ is an element of the double coset $\hat H_2e\hat H_1$ meaning that so is $c_i^{-1}d_j$.
    \begin{itemize}
        \item Thus: $a_{ij}=1$ iff $c_i^{-1}d_j\in \hat H_2e\hat H_1$\\
    \end{itemize}

    \item Suppose we know an entire column of the matrix $A$ wlog let it be $a_{i1}$. Then this is enough information to construct the entire matrix. Consider $c_i\hat H_2\cap \hat H_1$ and left multiple $\hat H_1$ by an element of $\{d_j\}$ this moves us from one column to another. In particular we now need to look at the intersection $c_i\hat H_2\cap d_j\hat H_1$. If we multiple we then have $d_j^{-1}c_i\hat H_2\cap \hat H_1$. So either $c_i\hat H_2\cap d_j\hat H_1$ and $d_j^{-1}c_i\hat H_2\cap \hat H_1$ are either both empty or non-empty. In particular we have $a_{ij}=a_{k1}$ where $c_k=d_j^{-1}c_i$.

    Thus, the left action of $g$ on $H$ has the effect of permuting the original column to generate the rest of the matrix. 
    \end{itemize}

    Since our ultimate goal is to determine the rank of $A$ often times the matrix $A$ will contain redundant information in the form of repeated rows and columns which are not linearly independent it is therefore advantageous to understand under what conditions cosets will generate the same row/column.

    Important for us is the following result: Take an element $g\in \hat{G}$ and  coset $d_i\hat H_1$ and consider the left multiplication. For which elements of $g$ will the coset be unchanged? That is:
\be
gd_j\hat H_1=d_j\hat H_1.
\ee
These elements are the stabilizers of the coset which are the conjugates $d_i\hat H_1d_i^{-1}$. Now if we want those group elements which are stabilizers of \emph{every} coset then this defines a subgroup called the \emph{core} of $\hat{H_1}$ in $\hat G$: $C_{\hat G}(\hat H_1)$.

    \begin{definition}[Core of a subgroup] For a group $G$ the core of a subgroup $H$ is the largest subgroup of $H$ which is normal in $G$. Equivalently
    \be
    C_{G}(H)=\cap_{g\in G}gHg^{-1}
    \ee
    \label{defcore}
    \end{definition}

So now let $x\in C_{\hat{G}}(\hat H_2)$ and consider for fixed $d_j$ the column vector corresponding to $c_i\hat{H}_2\cap d_j\hat H_1$ and left multiple $d_j\hat H_1$ by $x$. Then equivalently we have
\be
 x^{-1}c_i\hat H_2\cap d_j\hat H_1=c_i\hat H_2\cap d_j\hat H_1 \quad \forall i
\ee
so the two vectors are the same component by component. Since the vectors are the same they are not linearly independent. The same is also true for the row vectors by considering the core $C_{\hat G}(\hat H_1)$

Thus, we can define the new subgroups $\langle\hat H_1,C_{\hat{G}}(\hat H_2)\rangle$ and $\langle\hat H_2,C_{\hat{G}}(\hat H_1)\rangle$. What this has the effect of doing is combining all of the cosets of the original subgroups which had the same column/row vector. So if we examine the new intersection matrix it will have the same rank, but all row vectors will be unique (no repeats) and all column vectors will be unique (no repeats).

Now we are in the situation that $\langle C_{\hat{G}}(\hat H_1),C_{\hat{G}}(\hat H_2)\rangle$ is a normal subgroup of $\langle\hat H_1,C_{\hat{G}}(\hat H_2)\rangle$, $\langle \hat H_2,C_{\hat{G}}(\hat H_1)\rangle$ and $\hat G$ so we can take a quotient and instead consider the groups
\be\label{eq:Aquotient}
\begin{split}
\tilde G&=\hat G/\langle C_{\hat{G}}(\hat H_1),C_{\hat{G}}(\hat H_2)\rangle\\
\tilde H_1&=\langle \hat H_1,C_{\hat{G}}(\hat H_2)\rangle/\langle C_{\hat{G}}(\hat H_1),C_{\hat{G}}(\hat H_2)\rangle=\hat H_1/C_{\hat{G}}(\hat H_1)\\
\tilde H_2&=\langle \hat H_2,C_{\hat{G}}(\hat H_1)\rangle/\langle C_{\hat{G}}(\hat H_1),C_{\hat{G}}(\hat H_2)\rangle=\hat H_2/C_{\hat{G}}(\hat H_2)
\end{split}
\ee
with the final result
\begin{theorem}\label{thm:atilde} Let the groups $\hat G,\hat H_1,\hat H_2$ and $\tilde G,\tilde H_1,\tilde H_2$ be defined as above with left coset intersection matrices $A,\tilde A$
\begin{itemize}
    \item The rank of the two matrices is the same:  \be
rank(A)=rank(\tilde A)
\ee
\item The row vectors of $\tilde A$ are all unique. The column vectors of $\tilde A$ are all unique.
\end{itemize}

\end{theorem}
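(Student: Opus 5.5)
The plan is to compare $A$ and $\tilde A$ through an intermediate matrix $B$. I will write $N_1=C_{\hat G}(\hat H_1)$ and $N_2=C_{\hat G}(\hat H_2)$, so that $N=N_1N_2=\langle N_1,N_2\rangle$ is normal in $\hat G$. Under the standing assumption $\hat H_1\cap\hat H_2=\{e\}$, Theorem~1 says that every entry of $A$ simply records whether $c_i\hat H_2\cap d_j\hat H_1$ is empty. I will read $\tilde A$ in the same way, as the incidence matrix recording whether $\tilde c\tilde H_2\cap\tilde d\tilde H_1$ is nonempty in $\tilde G$. This reading has to be fixed, because $\tilde H_1\cap\tilde H_2$ need not be trivial, and then raw intersection counts mod 2 could differ.

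\textbf{Step 1 (duplication).} I will show that column $j$ of $A$ depends only on the coset $d_j\hat H_1N_2$, and row $i$ only on $c_i\hat H_2N_1$. Since $N_2$ is normal, $\hat H_1N_2$ is a subgroup. Suppose $d'\in d\hat H_1N_2$. Because $N_2$ is normal, I can rewrite $d'\hat H_1=dy\hat H_1$ for some $y\in N_2$. Setting $z=dyd^{-1}\in N_2$ gives $d'\hat H_1=zd\hat H_1$. Since $N_2$ fixes every left coset of $\hat H_2$, left multiplication by $z^{-1}$ gives a bijection from $c\hat H_2\cap zd\hat H_1$ onto $c\hat H_2\cap d\hat H_1$. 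This is exactly the argument sketched just before the statement. The row case is symmetric.

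Consequently $A$ is obtained from the matrix $B$, indexed by $\hat G/\hat H_2N_1\times\hat G/\hat H_1N_2$, by repeating rows and columns. Repeating rows and columns does not change the rank, so $\mathrm{rank}(A)=\mathrm{rank}(B)$.

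\textbf{Step 2 ($B=\tilde A$).} The quotient map $\pi:\hat G\to\tilde G$ puts the left cosets of $\tilde H_2$ in bijection with $\hat G/\hat H_2N$. This uses $\hat H_2N=\hat H_2N_1$, because $N_2\subseteq\hat H_2$. Similarly, the left cosets of $\tilde H_1$ correspond to $\hat G/\hat H_1N_2$. So $\tilde A$ and $B$ have the same index sets.

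To compare entries, I take preimages: $\pi^{-1}(\tilde c\tilde H_2\cap\tilde d\tilde H_1)=c\hat H_2N_1\cap d\hat H_1N_2$. This set is a union of pieces $c'\hat H_2\cap d'\hat H_1$ with $c'\hat H_2\subseteq c\hat H_2N_1$ and $d'\hat H_1\subseteq d\hat H_1N_2$. By Step 1, all these pieces are simultaneously empty or simultaneously nonempty. Hence the entry of $\tilde A$ equals the corresponding entry of $B$, which gives $\mathrm{rank}(A)=\mathrm{rank}(\tilde A)$.

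\textbf{Step 3 (no repeated rows or columns).} Column $\tilde d\tilde H_1$ of $\tilde A$ is the indicator of the set of cosets $\tilde c\tilde H_2$ contained in $\tilde d\,\tilde H_1\tilde H_2$. So two columns coincide exactly when $\tilde d\,\tilde H_1\tilde H_2=\tilde d'\,\tilde H_1\tilde H_2$. Equivalently, $u=\tilde d^{-1}\tilde d'$ lies in the left stabilizer $S=\{u:u\tilde H_1\tilde H_2=\tilde H_1\tilde H_2\}$. This $S$ is a subgroup containing $\tilde H_1$ and satisfying $S\tilde H_2=\tilde H_1\tilde H_2$. The goal is to prove $S=\tilde H_1$. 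The rows are handled symmetrically, with the roles of $\tilde H_1$ and $\tilde H_2$ exchanged.

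I expect this step to be the main obstacle. The core-merging argument only removes duplications that occur uniformly over all cosets. A single coincidence of columns does not obviously force this. My first attempt will be to use that both $\tilde H_1$ and $\tilde H_2$ have trivial core in $\tilde G$, and to look for a nontrivial normal subgroup built from $S$ and its conjugates that would contradict this. If that fails, I will check the statement on the small quotients that actually occur in the paper, namely $\mathbb D_{2n}$ with two non-conjugate reflections. I would then add whatever extra hypothesis is needed, such as $S\cap\tilde H_2=\{e\}$ together with an order count. The rank statement from Steps 1 and 2 does not depend on this step.
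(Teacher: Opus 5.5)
Your Steps 1 and 2 are correct, and they follow the paper's route. The paper also notes that elements of $C_{\hat G}(\hat H_2)$ fix every coset $c_i\hat H_2$, so columns $d_j$ and $xd_j$ agree. It then merges cosets by the cores and passes to $\hat G/N$. You supply details it skips: the conjugation $z=dyd^{-1}$, and the explicit identification of $\tilde A$ with the de-duplicated matrix $B$.

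One correction: your caveat about $\tilde H_1\cap\tilde H_2$ is unnecessary.
\begin{itemize}
\item Since $N_1\cap N_2\subseteq\hat H_1\cap\hat H_2=\{e\}$ and both are normal, they commute elementwise.
\item So $h_1n_2=h_2n_1$ forces $h_1n_1^{-1}=h_2n_2^{-1}\in\hat H_1\cap\hat H_2=\{e\}$.
\item Hence $\hat H_1N_2\cap\hat H_2N_1=N$ and $\tilde H_1\cap\tilde H_2=\{e\}$, so your incidence reading of $\tilde A$ agrees with the paper's mod-2 definition.
\end{itemize}
A side benefit of defining $B$ directly from $A$: you avoid the paper's intermediate matrix for $\hat H_1N_2,\hat H_2N_1$. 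Its nonempty intersections are cosets of $N$, so taken literally mod 2 it would vanish whenever $|N|$ is even. This settles the first bullet.

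The gap is Step 3, and it cannot be closed, because the second bullet is false. The paper supports it only by asserting that merging by the cores ``combines all of the cosets \ldots which had the same column/row vector.'' Your own criterion shows why that is unjustified: columns are distinct iff $S=\tilde H_1$, and nothing forces this.

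Here is a counterexample inside the paper's framework, for $\ket{C_4}$ itself.
\begin{itemize}
\item Take $\hat G=S_5$ with $r_1=(1\,2)$, $r_2=(2\,3)$, $r_3=(3\,4)$, $r_4=(4\,5)$. All four are odd, so this is a legitimate extended replica group with $G=A_5$.
\item Then $\hat H_1=\mathrm{Sym}\{1,2,3\}$ and $\hat H_2=\mathrm{Sym}\{3,4,5\}$ intersect trivially.
\item Both cores are trivial, since the only normal subgroups of $S_5$ are $1$, $A_5$ and $S_5$. Hence $N=\{e\}$ and $\tilde A=A$ is $20\times20$.
\item But $(4\,5)\in\hat H_2\setminus\hat H_1$ commutes with $\hat H_1$. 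So $d(4\,5)\hat H_1\neq d\hat H_1$, while $d(4\,5)\hat H_1\hat H_2=d\hat H_1(4\,5)\hat H_2=d\hat H_1\hat H_2$, and these two columns coincide.
\item Symmetrically, using $(1\,2)$, rows repeat as well.
\end{itemize}
An even simpler example outside the replica setting: take $\hat G=S_4$, $\hat H_1=\mathrm{Sym}\{1,2,3\}$, $\hat H_2=\langle(1\,2\,3\,4)\rangle$. This is an exact factorization by core-free subgroups, so $\tilde A=A$ is the all-ones $6\times4$ matrix.

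So the strategy you sketched cannot work. Trivial cores do not prevent repeats, and the premise that $\tilde H_1,\tilde H_2$ are core-free in $\tilde G$ is not automatic either. For instance, with $\hat G=S_4$, $\hat H_1=\langle(1\,2\,3)\rangle$ and $\hat H_2=V_4$, you get $\tilde G\cong S_3$ with $\tilde H_1\cong\mathbb{Z}_3$ normal, and $\tilde A$ has three copies of each of its two rows.

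The correct statement is your criterion itself:
\begin{itemize}
\item the columns of $\tilde A$ are distinct iff $\{u:u\tilde H_1\tilde H_2=\tilde H_1\tilde H_2\}=\tilde H_1$;
\item the rows are distinct iff the analogous left stabilizer of $\tilde H_2\tilde H_1$ equals $\tilde H_2$.
\end{itemize}
These are extra hypotheses. They do hold for the quotients $(\mathbb{D}_{2n},\{e,r_1\},\{e,br_1\})$ with $n\ge3$ used in the paper. Since only the rank equality is used downstream, your Steps 1--2 cover everything the paper's computations need; you should drop Step 3 rather than try to prove it.
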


\paragraph{The case of normal subgroups} When the subgroups $\hat H_1$ and $\hat H_2$ are normal subgroup of $\hat G$ we can derive a particularly simple answer for $rank(A)$.

As a reminder here is the definition for normal subgroup:
\begin{theorem}
Let $G$ be a group and $N<G$ a subgroup. Let the sets $X$ and $Y$ generate the groups that is $N=\langle X\rangle$ and $G=\langle Y\rangle$. Then the following are equivalent:
\begin{itemize}
    \item $N$ is a normal subgroup of $G$ written $N\lhd G$
    \item $\forall\; g \in G$ and $n\in N: \quad gng^{-1}\in N$
    \item $\forall y\in Y$ and $x\in X:$ $yxy^{-1}$ and $y^{-1}xy\in N$
\end{itemize}
\end{theorem}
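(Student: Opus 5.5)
We prove the cycle of implications (i) $\Rightarrow$ (ii) $\Rightarrow$ (iii) $\Rightarrow$ (i), where (i) is $N\lhd G$, (ii) is the pointwise condition $gng^{-1}\in N$ for all $g\in G$, $n\in N$, and (iii) is the generator-level condition. We take (i) to mean $gN=Ng$ for every $g\in G$, equivalently $gNg^{-1}=N$; if instead (i) is read as (ii), that equivalence is the definition and only (ii) $\Leftrightarrow$ (iii) needs an argument. Throughout, $G$ is finite, as it is in all applications in this paper.

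\emph{(i) $\Rightarrow$ (ii).} From $gNg^{-1}=N$ we get $gng^{-1}\in N$ for every $n\in N$ directly.

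\emph{(ii) $\Rightarrow$ (i).} Condition (ii) gives $gNg^{-1}\subseteq N$ for every $g$. Applying it to $g^{-1}$ gives $g^{-1}Ng\subseteq N$, hence $N\subseteq gNg^{-1}$, and so $gNg^{-1}=N$.

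\emph{(ii) $\Rightarrow$ (iii).} Every $y\in Y$ and $y^{-1}$ lies in $G$, and every $x\in X$ lies in $N$. Condition (ii) therefore applies to both $yxy^{-1}$ and $y^{-1}xy$.

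\emph{(iii) $\Rightarrow$ (ii).} This is the only step with content. Fix $y\in Y$ and let $\phi_y(z)=yzy^{-1}$, which is a group automorphism of $G$.
\begin{enumerate}
\item \emph{Closure of $\phi_y$ on $N$.} Each $n\in N$ is a finite word $x_1^{\pm1}\cdots x_k^{\pm1}$ with $x_i\in X$. Hence $\phi_y(n)=\phi_y(x_1)^{\pm1}\cdots\phi_y(x_k)^{\pm1}$. Each factor lies in $N$ by (iii), and $N$ is closed under products and inverses, so $\phi_y(N)\subseteq N$.
\item \emph{Closure of $\phi_{y^{-1}}$ on $N$.} The same argument, using the hypothesis $y^{-1}xy\in N$, gives $\phi_{y^{-1}}(N)\subseteq N$.
\item \emph{All of $G$.} Write any $g\in G$ as a word $g=y_1^{\epsilon_1}\cdots y_m^{\epsilon_m}$ with $y_j\in Y$ and $\epsilon_j=\pm1$. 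Then $\phi_g=\phi_{y_1^{\epsilon_1}}\circ\cdots\circ\phi_{y_m^{\epsilon_m}}$, a composition of maps each of which sends $N$ into $N$ by steps 1 and 2. Hence $gNg^{-1}\subseteq N$, which is (ii).
\end{enumerate}

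Step 2 is why (iii) includes the $y^{-1}$ conjugates. For finite $G$ it is automatic from step 1: $\phi_y$ is injective and $N$ is finite, so $\phi_y(N)\subseteq N$ forces $\phi_y(N)=N$, and thus $\phi_{y^{-1}}(N)=N$. The two-sided hypothesis is what makes the statement hold for infinite groups as well.

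The practical use in this paper is that, to verify normality of the subgroups $\hat H_{1,2}$ or their cores, it suffices to check conjugates of the generators, as was done in the core computations of Section~\ref{sec:coloringmethod}.
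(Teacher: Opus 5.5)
Your proof is correct, and it is the standard argument. The paper gives no proof to compare against: it quotes this equivalence as a reminder of a textbook fact. Its only added comment is that the third condition reduces a normality check to finitely many conjugations of generators. Your step (iii) $\Rightarrow$ (ii) supplies exactly that justification. Conjugation $\phi_y$ is a homomorphism, so closure on $X$ extends to $N=\langle X\rangle$. And $\phi_{ab}=\phi_a\circ\phi_b$, so closure under $\phi_{y^{\pm1}}$ extends to all of $G=\langle Y\rangle$.

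Two minor remarks. First, your standing assumption that $G$ is finite is never used in steps 1--3. They rely only on elements of a generated subgroup being finite words in the generators and their inverses, so your argument proves the statement for arbitrary groups. Second, your comment on the $y^{-1}$ conjugates is accurate. For finite $G$ you could also skip step 2, because $y^{-1}=y^{m-1}$ where $m$ is the order of $y$. For infinite groups the two-sided hypothesis is genuinely needed. For example, in $\langle a,t\mid tat^{-1}=a^{2}\rangle$ with $N=\langle a\rangle$, one has $tat^{-1}\in N$ but $t^{-1}at\notin N$.
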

Normal subgroups are important because they allow us to define quotient groups. For normal subgroups there is no distinction between left and right cosets because they are the same. The last line allows one to determine if a subgroup is normal using a finite number of multiplications by checking only the generators.

First let us define the double cosets:
\begin{definition}
Let $G$ be a group and $H$ and $K$ subgroups. A set of the form
\be
HgK=\{hgk:h\in H,K\in K\}
\ee
is a double coset.
\end{definition}
A few properties of double cosets:
\begin{itemize}
\item Double cosets are disjoint and cover the entire group $G$. 
\item Each double coset is the union of some subset of the left cosets of $K$. Respectively the union of some subset of the right cosets of $H$.
\item $g_1K$ and $Hg_2$ have non-empty intersection iff they lie in the same double coset $Hg_2g_1K$.
\item Unlike cosets the number of double cosets need not divide the order of the group.
\end{itemize}
In \cite{2013arXiv1304.6111B} the left-right coset intersection matrix $B$ of $gH$ and $Kg$ was considered. It is shown that the resulting matrix always consists of blocks of all ones where the number of blocks is precisely the number of double cosets. Each block gives rise to a single linear independent vector and it can be concluded that ${\rm rank}(B)=|K\setminus G\,/\,H|$.

So now:
\begin{itemize}
\item If one of the subgroups, say $K$, is normal then the left cosets $gK$ and the right cosets $Kg$ are the same. As a consequence the two matrices $A,B$ are the same and we have ${\rm rank}(A)=|K\setminus G\,/\,H|$.  The final answer for the multi-invariant \eqref{eq:4clusterfinal} becomes
\be
\begin{split}
=-\left(|\hat G|-\frac{|\hat G|}{|\hat H_1|}-\frac{|\hat G|}{|\hat H_2|}+|\hat H_2\setminus\hat G\,/\,\hat H_1|\right)\log(2)
\end{split}.
\ee

\item 
If both of $H,K$ are normal subgroups of $G$ then the identity double coset $KeH$, that is the set $\{kh|h\in H,k \in K\}$, is a normal subgroup of $G$ and the smallest subgroup of $G$ containing both $H$ and $K$. This induces an isomorphism between the double cosets $K\setminus G/H$ and the quotient group $ G/KH$. In particular $|KH|=\frac{|H||K|}{|H\cap K|}=|H||K|$ and we have $|K\setminus G\,/\,H|=[G:KH]=\frac{|G|}{|H||K|}$. the logarithm of~\eqref{finalcoarse} further simplifies to
\be\label{eq:bothnormal}
\begin{split}
=-|\hat G|\left(1-\frac{1}{|\hat H_1|}-\frac{1}{|\hat H_2|}+\frac{1}{|\hat H_1||\hat H_2|}\right)\log(2)
\end{split}.
\ee
\end{itemize}

One might be tempted to propose that even in the case of non-normal subgroups it would be generically true that for any finite group and subgroups $rank(A)=|K\setminus G/H|$ however this is false as shown by the following simple counter example: We take the dihedral group of order eight

\be
\mathbb{D}_8=\langle a,b|a^4,b^2,(ba)^2\rangle
\ee
and the two non-normal subgroups
\be
H=\{e,b\}, \quad K=\{e,ab\}
\ee
then
\be
\begin{split}
    g H:\quad e& H =\{e,b\}\\
    a& H =\{a,ab\}\\
    a^2& H =\{a^2,a^2b\}\\
    a^3& H =\{a^3,a^3b\}\\    
\end{split}
\ee
\be
\begin{split}
    g K:\quad e& H =\{e,ab\}\\
    a& K =\{a,a^2b\}\\
    a^2& K =\{a^2,a^3b\}\\
    a^3& K =\{a^3,b\}\\    
\end{split}
\ee
There are two double cosets
\be
\begin{split}
    & Ke H=\{e,,b,ab,a\}\\
    & Ka^2H=\{a^2,a^2b,a^3b,a^3\},\\
\end{split}
\ee
but the left coset intersection matrix is:
\be
A=\begin{pmatrix}
1 & 1 & 0 & 0 \\
0 & 1 & 1 & 0 \\
0 & 0 & 1 & 1  \\
1 & 0 & 0 & 1  
\end{pmatrix}
\ee
We used $e,a,a^2,a^3$ as representatives of cosets, which corresponds to the rows and columns. This matrix has rank 3 over $\mathbb{F}_2$.

\section{Rank of circulant matrices}
\label{app:circ}
Matrices~(\ref{circulants}) are called \emph{circulant matrices}. Their defining property is that all rows are given by sequential cyclic permutations of one another. Here we compute their ranks.

Writing the first row vector in the form $(c_0,c_{n-1},\cdots,c_1)$, construct the polynomial:
\be
f(x)=c_0+c_1x+\cdots c_{n-1}x^{n-1}
\ee
As shown in Reference~\cite{cirmatrank}, the rank is then given by:
\be
{\rm rank}(A)=n-\deg(\gcd(f(x),x^n-1))
\label{rankfromref}
\ee
We remind the reader that we are working over $\mathbb{F}_2$, so for example $2x^2+x-1=x+1$. Thus, the greater common divisor in equation~(\ref{rankfromref}) has to be handled with some care.

We use the following field-independent facts about the greatest common divisor:
\begin{itemize}
    \item For any polynomials $p,q,r$, $\gcd(p,q+rp)=\gcd(p,q)$.
    \item Given a polynomial $p(x)$, $(x-a)$ divides $p(x)$ if and only if $p(a)=0$.
\end{itemize}
Applied to matrices~(\ref{circulants}), we have $f(x)=x^{n-1}+1$ and $g(x)=x^n-1$. Considering the combination $x f(x) + g(x) = 2x^n + x - 1 = x-1$, the rank is $n - \deg(\gcd(x^{n-1}+1, x-1)$, which can only be $n$ or $n-1$. Using the second property, we find that $x-1$ divides $f(x)$ because $f(1)=0$. Therefore:
\be
{\rm rank}(A)= n - 1.
\ee

\bibliography{main.bib}
\end{document}